\documentclass[12pt]{article}
\usepackage{graphicx} 
\usepackage[dvipsnames]{xcolor}
\usepackage[margin=1in]{geometry}
\usepackage{amssymb,amsmath,amsfonts,eurosym,geometry,graphicx,caption,setspace,sectsty,comment,footmisc,pdflscape,subfigure,array,hyperref, ragged2e,afterpage}
 
\usepackage{float}
\usepackage{threeparttable}
\usepackage[dvipsnames]{xcolor}
\hypersetup{colorlinks=true, citecolor=black, urlcolor=black, linkcolor=black}
\usepackage{booktabs}
\usepackage{siunitx}
\usepackage[ruled,vlined]{algorithm2e}
\usepackage{graphicx}
\usepackage{subcaption} 
\usepackage{bigstrut}
\usepackage{chngcntr}
\usepackage{multirow}
\usepackage{algpseudocode}
\usepackage{longtable}
\usepackage{lmodern}
\usepackage{tikz-cd}
\usepackage{pgfplots}
\pgfplotsset{compat=1.18}

\usepackage[authoryear, round, longnamesfirst]{natbib}
\newtheorem{theorem}{Theorem}
\newtheorem{remark}{Remark}
\newtheorem{corollary}{Corollary}
\newtheorem{lemma}{Lemma}

\newtheorem{assumption}{Assumption}
\newtheorem{proof}{Proof}

\newtheorem{definition}{Definition}

\DeclareMathOperator*{\argmin}{argmin}
\DeclareMathOperator*{\argmax}{argmax}
\date{\today}
\title{Learning Where to Look: Delaunay Matching for Policy Choice and Data Collection}
\author{Giacomo Opocher\thanks{\scriptsize University of Bologna, giacomo.opocher2@unibo.it. \textbf{This version is work in progress, feel free to contact me if you have any comment.} This paper greatly benefited from the guidance of Silvia Sarpietro and Davide Viviano, and meaningful discussions with Liyang Sun. I also thank all seminar participants at the University of Bologna for their insightful comments. All mistakes are my own.}}
\begin{document}
{\maketitle
\vspace{-2cm}
\begin{abstract} \textbf{Abstract.}
  This paper studies data-driven policy choices from a geometric perspective.
  A sample from a \textit{donor} population fully exposed to an innovation informs a policymaker (PM) on whether to innovate groups in a \textit{target} population where the innovation was not introduced.
  Any wrong decision must be compensated from a finite budget and the PM seeks an estimator of the innovation's effect that guarantees an affordable compensation cost.
  I focus on matching estimators with positive weights and derive affordability guarantees when finite and large samples of the populations of interest are available.
  In the latter case, Delaunay interpolants, whose properties are well-known from results in computational geometry, deliver the smallest budget that covers the compensation cost uniformly over the admissible target populations, and conditional on the donor collection design. 
  This result informs \textit{where to look} for new donor observations to decrease the worst-case compensation cost the most.
  In an empirical application, I show that such collection plans halve the cost by adding three donor units, while random sampling fails to reach the same target within twenty additions.
  \\
  \noindent\textbf{Keywords:} Statistical Decision Theory, Matching, Delaunay Triangulation. 
  \vspace{2.5cm}
\end{abstract}}

\section{Introduction}
Policymakers (PMs) often need to decide whether to introduce an innovation across groups in a \textit{target} population where no unit has ever received it before. 
Evidence may instead come from a random sample of a distinct \textit{donor} population where all units were exposed to the innovation. 
Non-random staggered roll-out policies fit this description. 
The PM decides who should be innovated first following fairness, equality, or any non-data driven and non-random concerns; 
once these groups are innovated, she wants to learn how to procede further with the roll-out. 
In such settings, the policymaker must determine both which donors are informative about each target group and, when existing evidence is insufficient, where additional donor data should be collected.

Matching estimators constitute a natural option for solving this problem.
Indeed, they are often valued for the relatively weak structure they impose when pooling information across donor units to approximate one outcome of interest at a given target point \citep[see, e.g.][]{abadtie_imbens_2006,abadie_2021_rev}. 
Yet, that same flexibility can leave the choice of weights ambiguous. 
Several donor combinations, regardless of the distance they cover around the target unit, can fit it equally well in the matching space. 
Therefore, weighting schemes that are equally justified \textit{ex-ante} can lead to dramatic differences in performance \citep[see][]{Abadie2021}.
%As a result, in choosing among such estimators, a researcher must trade off how well the resulting weighted average recovers the target point in the matching space against how different the target is compared to each donor unit that receives a positive weight \citep[see][]{Abadie2021}.

This paper studies this research design problem from a new (statistical) decision theoretic angle. 
The PM must commit on an estimator of the target's innovation effect that pools information across donors before the donor sample is drawn.
A donor sample is collected over a covariate space and a collection design is defined as the exact covariate values that realize in the donor sample.
The PM seeks an estimator that keeps the average compensation cost below a finite budget \textit{given} the collection design and \textit{for any} target population.
This goal disciplines the choice of weights and, when large samples are available, reduces it to a purely geometric problem with a unique solution under standard regularity conditions.
Not only does this perspective provide analytical guarantees for the resulting choice of weights, but it also guides the design of data collection plans that empirically outperform random sampling in reducing worst-case compensation cost.
In this sense, this work studies two related questions: \textit{where to look} among existing donors to inform policy choices, and \textit{where to look next} when collecting new donor observations.

The setting of interest is defined in further detail as follows.
Consider two counterfactual states: the status quo and one innovation. 
A policymaker (PM) decides whether to innovate groups in a target population. 
The donor and target populations are distinct.
The PM commits to an estimator of the innovation effect, observes a sample drawn from an innovated donor population according to a given collection design, and then observes the target units. 
Decisions are made by a threshold rule that takes the value one if the estimated innovation effect for units in the target is strictly positive.  
The PM must compensate any wrong decision out of a finite compensation budget. 
For simplicity, compensation equals the innovation effect that would (not) have occurred had the PM chosen (not) to innovate.
Once the target population realizes, decisions are made and the compensation cost must be paid out. 
An estimator provides an \textit{affordability guarantee} if the compensation cost is covered by the budget uniformly over the admissible target populations and innovation effects, given the collection design.

I introduce certification as a theoretical device to provide affordability conditions.
An estimator produces certified decisions if the probability of making a mistake is uniformly controlled, provided that the innovation effects are \textit{large enough} in absolute value.
As a first result, I show that the bounds on the effects' magnitude and on the probability of making mistakes imply an affordability guarantee. 

Then, I study a sufficient condition for an estimator to provide certified decisions in general, and specialize this condition to the set of linearly precise matching estimators with positive weights.
I derive a finite-sample certification condition for this class of estimators, and show that if we let the size of the donor and target samples grow we can study the problem from a purely geometric perspective.

As a first core contribution, I connect the certification constraints with well-known results in computational geometry \citep{Delaunay_1934aa}. 
I define the Delaunay Matching Estimator by barycentric interpolation on the Delaunay triangulation of the donor covariates.
I then show that, within the class of matching estimators with positive weights, Delaunay Matching requires the smallest lower bound on intervention effects to certify decisions at every feasible target point.   
Feasible target points belong to the convex hull of the donor collection design.
This result is proved via a standard lifting argument.
As a consequence, Delaunay Matching also delivers the smallest worst-case asymptotic compensation cost, thereby providing the tightest affordability guarantee.

As a second core contribution, I study how new donor data should be collected when the PM aims to bring worst-case compensation below a target level. 
Although I do not characterize the exact minimax collection rule, I use the optimality of Delaunay Matching together with geometric results in \citet{waldron} to motivate a new collection rule, which I call the \textit{geometric plan}. 
The rule adds donor observations at locations where the current approximation is weakest. 
\cite{waldron} proves that the worst-case interpolation error is bounded by the square of the radius of the smallest ball that contains the \textit{largest} simplex in the triangulation.
This bound is achieved in the center of such a ball which therefore constitutes the worst target we could take a decision for and the first point to add in the donor data according to the geometric plan. 
This collection plan is shown empirically to deliver sizeable gains relative to random collection within the donors' convex hull.

I illustrate the applicability of Delaunay Matching with a semi-synthetic application in development economics built on the NREGS Smartcards experiment of \citet{muralidharan_building_2016}. 
In the original experiment, the government randomized the rollout of biometric Smartcards to deliver cash transfers across $296$ mandals in rural Andhra Pradesh.
The study then surveyed households in $880$ Gram Panchayats (GPs) to measure how the new payment system affected leakage and service delivery. 
In the targeting exercise, control GPs define the target population, treated GPs define the donor population, and the policymaker decides cell by cell whether to deploy Smartcards based on two baseline covariates, innovating whenever the estimated gain from lower leakage is positive. 
Cells are defined over quintiles of GPs' baseline log annual consumption and NREGS exposure.
The exercise keeps the real GP-level donor and target covariate spaces, but imposes a fictitious treatment-effect frontier to illustrate the performance of the method. 
Delaunay Matching takes a decision for $21$ out of $25$ target cells, certifies $16$ cells asymptotically and $14$ in finite samples, and no certified cell receives the wrong decision. 
I then consider a policymaker who wants to halve the current worst-case compensation cost by collecting additional donor data.
The geometric plan achieves the target level with only three additional donor points, whereas a standard collection plan that picks new donors at random within the convex hull of the collection design fails to reach it within twenty steps.
This finding suggests sizeable gains from adopting a geometric perspective in designing cost-efficient collection plans for policy choice problems.

\paragraph{Related Literature.} 
First, this study contributes to a growing literature that applies statistical decision theory to the problem of targeting interventions, highlighting the potential of adopting a geometric perspective to control worst-case loss. 
There are three main differences with standard approaches \citep[e.g.][]{manski_statistical_2004,stoye_minimax_2009, kitagawa_who_2018,athey_policy_2021,mbakop_model_2021}. 
First, the target and donor populations do not coincide. 
This departure is also considered in papers that study problems of partial identification in treatment choice. 
\cite{stoye_covariates_2012}, \cite{yata_2025}, and \cite{Olea_2026} study how the length of the identified set affects how data-dependent policy recommendations should be and show that, when the sets are large, it may be optimal to adopt fractional or even no-data rules.
\cite{yata_2025} and \cite{adjaho_external_2022} use different measures of Wasserstein distance between the estimating and target populations to study how optimal recommendations would change along that metric.
None of these papers consider matching estimators. 
Moreover, all of them study the performance of decision rules for a fixed estimator, while this paper does the opposite. 
\\
Second, the PM evaluates worst-case performance over the target population while conditioning on the donor collection design. 
To the best of my knowledge this case was not considered in previous literature, where it is standard to consider the average draw of the estimating sample over the covariate (and outcome) space.
\\
Third, the assignment mechanism of the intervention is deterministic: all individuals in the donor population have received it and all in the target population have not. 
Therefore, unconfoundedness (i.e. the assignment of the intervention being conditionally independent of potential outcomes) and strict overlap (i.e. each unit having a strictly positive probability of receiving the intervention or staying in the status-quo) do not hold.\footnote{For references on these assumptions see e.g. Section 3.1 \cite{manski_statistical_2004}, Assumption 2.1 \cite{kitagawa_who_2018}, Assumption 2.1, 3.1 \cite{mbakop_model_2021}.} 
Relaxing the constraints on the assignment mechanism, however, comes with two costs. 
First, I impose a partially linear potential outcomes equation that sums an unknown twice-differentiable function of covariates (henceforth causal law) and a random individual component. 
Both the function and the idiosyncratic component are potential, and therefore allowed to vary across counterfactual states.
Second, while the distribution of covariates and of the idiosyncratic component is allowed to vary across the target and donor populations, the causal law is fixed between the two and the idiosyncratic component is assumed to have conditional mean zero.

This study also contributes to the literature on matching and synthetic control estimators. 
Synthetic control methods are usually studied in panel data settings where each treated unit can be approximated by a convex combination of donor units with nonnegative weights that sum to one, a feature that yields sparse and interpretable counterfactual comparisons \citep[e.g.][among others]{Abadie01062010,Ben-Michael02102021, abadie_2021_rev, synth_did_2021, Chernozhukov02102021, Kellogg02102021}. 
The interest in the geometric properties of such estimators draws from results in \citet{Abadie2021}, who study synthetic controls for disaggregated data and show that, when treated units lie in the convex hull of the donor pool, the synthetic control problem may admit multiple solutions. Their penalized estimator restores uniqueness and sparsity, and they provide a Delaunay characterization of the donor units that receive positive weight. 
I build on this geometric insight to study the decision theoretic properties of a similar class of estimators.

To the best of my knowledge, only \citet{li_yan_matching_2025} connected matching estimators to policy choice problems. 
They consider nearest-neighbor matching and bias correction to estimate welfare, and then optimize empirical welfare over a constrained policy class, deriving non-asymptotic regret guarantees in line with \cite{kitagawa_who_2018}. 
Their problem differs from the one considered here along several dimensions.
The three main differences are that (i) they study a single observational population under unconfoundedness and overlap, (ii) they maximize empirical welfare over the policy space while holding the matching procedure fixed, and (iii) their guarantees are not conditional on the collection design. 

Overall, this paper contributes to the broader literatures on statistical decision theory and causal inference highlighting the value of studying the performance of data-driven decisions from a geometric perspective.

The rest of the paper is organized as follows. 
Section \ref{sec:PM_obj} describes the decision problem and lays out the main assumptions. 
Section \ref{sec:one_solution} solves the decision problem in general and then specializes the result to matching estimators.
Section \ref{sec:delaunay} introduces Delaunay Matching and provides the optimality result.
Section \ref{sec:data_collection} describes how to leverage Delaunay Matching's optimality to define geometric data collection plans.
Section \ref{sec:semi_synth} illustrates the semi-synthetic empirical application.
Section \ref{sec:conclusions} concludes.

\section{Formal Description of the PM's Objective}\label{sec:PM_obj}

Denote by $(Y_i(0),Y_i(1))$ the potential outcomes of unit $i$ under the status quo and the innovation respectively. 
Denote by $(X_i, U_i(0), U_i(1))$ a random vector containing a set of continous covariates $X_i$ and idiosyncratic factors $(U_i(0),U_i(1))$ that depend on the counterfactual state.
Let $d \in \{0,1\}$ denote the state. 

\begin{assumption}[Potential Outcomes Function]\label{ass:model}
    Assume the potential outcomes functions have the following form:
    \begin{align}
        Y_i(d) & = f_{d}(X_i) + U_i(d)
    \end{align}
    where $f_{d}$ belongs to the admissible functions set $\mathcal{F}_c$ defined as:
    \begin{equation}
        f_0,f_1 \in \mathcal{F}_c:=\left\{f: \mathcal{X} \rightarrow \mathcal{Y}: \sup_{x\in\mathcal{X}}||\nabla^2 f(x) || \leq c<\infty\right\}
    \end{equation}    
    for any $x\in\mathcal{X}$, and $|Y_i(1)- Y_i(0)|\leq \bar{\tau}$. 
\end{assumption}
Assumption \ref{ass:model} imposes that the potential outcome functions equal the sum of a twice continously differenciable function of covariates with bounded second derivatives, and an idiosyncratic factor.
Both can vary with the counterfactual state $d$.
Moreover, it imposes individual treatment effects to be uniformly bounded by a constant $\bar{\tau}$.

After the PM committed on a choice of the estimator for the innovation effect, she observes a random sample of $N$ units drawn from the donor population with typical element indexed by $i$:
\begin{equation}
\{(X_i,Y_i(1))\}_{i=1}^N\sim_{\mathrm{i.i.d.}} P^N,
\end{equation}
and a target sample of $J$ units with typical element indexed by $j$:
\begin{equation}
\{(X_j,Y_j(0))\}_{j=1}^J\sim_{\mathrm{i.i.d.}}Q^J.
\end{equation}
The probability laws in the donor and target population, $P$ and $Q$, are distinct but belong to the same family $\mathcal P$.

\begin{assumption}[Distributions Family]\label{ass:distr}
    Define $\mathcal{P}=\{R\ \text{over}\ \mathcal{X}\times \mathcal{U}^2: \mathbb{E}_{R}[U_i(d)|X_i]=0,\ |U_i(d)|\leq \bar{U}_d< \infty,\ \text{for}\ d \in \{0,1\}\}$. 
\end{assumption}

Assumption \ref{ass:distr} imposes that any distribution belonging to $\mathcal{P}$, whose typical element is indexed by $R$, gives rise to bounded idiosyncratic factors with zero conditional mean.

Taken together, Assumptions \ref{ass:model} and \ref{ass:distr} allow for $P$ and $Q$ to have arbitrarily different covariate distributions and different idiosyncratic term distributions as long as Assumption $\ref{ass:distr}$ is satisfied. 
Given these two, the distributions of potential outcomes are determined by the causal law in Assumption \ref{ass:model} which is common in $\mathcal{P}$.

Define the individual innovation effect and conditional average innovation effect for a unit in the target as:
\begin{equation}
\tau_j:=Y_j(1)-Y_j(0),
\qquad
\tau(x):=\mathbb E_Q[\tau_j\mid X_j=x].
\end{equation}

\begin{definition}[Innovation Effect Estimator]
Let $\mathcal M$ denote a class of estimators of $\tau_j$. Given a donor sample $S^N$, define
\begin{equation}
\hat\tau_j=m(X_j,Y_j(0),S^N).
\end{equation}
\end{definition}

Therefore, an estimator is a function that takes as input the donor sample $S^N$ and one target draw $j$ with random values $(X_j, Y_j(0))$, and gives as output an estimate of the innovation effect.

Let
\begin{equation}
N_x:=\sum_{j=1}^J \mathbf 1\{X_j=x\},
\end{equation}
and any $x$ such that $N_x\ge 1$, define the innovation effect estimate
\begin{equation}\label{eq:tauhat_x_def}
\hat\tau_m(x)
:=
\frac{1}{N_x}
\sum_{j=1}^J
m(X_j,Y_j(0),S^N)\mathbf 1\{X_j=x\}.
\end{equation}
Given the estimate $\hat\tau_m(x)$, the policy rule is defined as
\begin{equation}
\hat{d}_m(x):=\mathbf 1\{\hat\tau_m(x)\ge 0\}.
\end{equation}
Define the oracle decision rule as
\begin{equation} \label{eq:oracle}
d^*(x):=\mathbf 1\{\tau(x)\ge 0\}.
\end{equation}
Denote with $\mathbf{X}_N$ the realized donor covariate matrix, that is, the donor collection design.

\begin{definition}[PM's Objective] \label{def:affordable_decisions}
    Assume the PM needs to compensate for any wrong decision. In particular, define the compensation cost at group $x$ as:
    \begin{equation}\label{eq:loss}
        \ell(x):=|\tau(x)| \cdot \mathbf 1\{d^*(x)\neq \hat{d}_m(x)\}.
    \end{equation} 
    Given a compensation budget $B_0$, the PM aims at setting ex-ante an estimator $m(\cdot)$ such that:
    \begin{equation}
        \sup_{Q\in \mathcal{P}}\sup_{f_0,f_1\in\mathcal F_c} \mathbb{E}_{Q^J \times P^N}[\ell(X_j) \mid \mathbf{X}_N] \leq B_0.
    \end{equation}
    That is, the PM wants to commit to an estimator that provides affordable decisions conditional on the donor points collected.
\end{definition}

According to Definition \ref{def:affordable_decisions}, the PM aims at committing on an estimator $m(\cdot)$ that guarantees the average compensation cost over the target units to be below a certain budget $B_0$.
The guarantee need to hold given the collection design $\mathbf{X}_N$, and uniformly over the target populations $Q \in \mathcal{P}$ and the potential outcome functions $f_0,f_1 \in \mathcal{F}_c$.
In other words, the PM is agnostic about the causal law and the target probability law, while takes as given the donor collection design. 

\begin{remark}
The loss in \eqref{eq:loss} is the same as regret \citep[as defined in][]{manski_statistical_2004} conditional on the collection design $\mathbf{X}_N$. 
Under Assumptions \ref{ass:model} and \ref{ass:distr},
\begin{equation}
  \tau(x)=f_1(x)-f_0(x).
\end{equation}
Therefore,
\begin{align}
R(m)
&:=
\mathbb{E}_{Q^J\times P^N}
\left[
f_{d^*(X_j)}(X_j)
-
f_{\hat d_m(X_j)}(X_j)
\mid \mathbf{X}_N
\right] \\
&=
\mathbb{E}_{Q^J\times P^N}
\left[
\tau(X_j)
\bigl(d^*(X_j)-\hat d_m(X_j)\bigr)
\mid \mathbf{X}_N
\right] \\
&=
\mathbb{E}_{Q^J\times P^N}
\left[
|\tau(X_j)|
\mathbf 1\{d^*(X_j)\neq\hat d_m(X_j)\}
\mid \mathbf{X}_N
\right] \\
&=
\mathbb{E}_{Q^J\times P^N}[\ell(X_j) \mid \mathbf{X}_N].
\end{align}
Hence, the compensation cost is exactly the regret generated by the estimated group-level decision evaluated at a given collection design.
\end{remark}

Solving the problem defined in Definition \ref{def:affordable_decisions} is challenging because the donor and target population do not coincide, and the assignment mechanism is deterministic.
As a result, the assumptions of unconfoundedness and strict overlap, which are common in the policy learning literature \citep[see, e.g. ][]{manski_statistical_2004,kitagawa_who_2018,athey_policy_2021,mbakop_model_2021,li_yan_matching_2025}, do not hold. 
Moreover, the guarantee need to hold conditional on the collection design, and not for the average collection.

I introduce certified decisions as an alternative theoretical device for solving the PM's objective in this setting.

\begin{definition}[Certified Decisions]
\label{def:PM_objective}
Fix an error rate $\alpha\in(0,1)$ and an estimator $m\in\mathcal M$.
A nonnegative scalar $\gamma_\alpha(m)$ is an
$\alpha$-certification boundary for $m$ if, defining
\begin{equation}
\mathcal C_\alpha(m)
:=
\left\{x\in\mathcal X:
|\tau(x)|>\gamma_\alpha(m)\right\},
\end{equation}
it satisfies
\begin{equation} \label{eq:certification}
\sup_{Q\in\mathcal P}
\sup_{f_0,f_1\in\mathcal F_c}
\mathbb P_{Q^J\times P^N}
\left(
d^*(X_j)\neq\hat d_m(X_j)
\mid X_j\in\mathcal C_\alpha(m), \mathbf{X}_N
\right)
\le\alpha.
\end{equation}
We then say that $m$ provides certified decisions at error rate
$\alpha$ with certification boundary $\gamma_\alpha(m)$.
\end{definition}

\begin{figure}
  \centering
  \caption{Certified decisions: graphical illustration.}
  \scalebox{1.15}{
  \begin{tikzpicture}
  \begin{axis}[
    width=10.5cm, height=5cm,
    axis lines=center,
    axis line style={->,thick},
    xlabel={$x$}, ylabel={$\tau(x)$},
    xmin=0, xmax=10.5,
    ymin=-2.0, ymax=2.0,
    xtick=\empty, ytick={\empty},
    clip=false,
  ]
    % Non-certified band
    \fill[orange!15] (axis cs:0,-0.6) rectangle (axis cs:10,0.6);
    % Gamma_alpha lines
    \draw[orange!65,dashed,thick]
      (axis cs:0, 0.6)--(axis cs:10.15, 0.6)
      node[right,font=\small,orange!80!black] {$\gamma_\alpha(m)$};
    \draw[orange!65,dashed,thick]
      (axis cs:0,-0.6)--(axis cs:10.15,-0.6)
      node[right,font=\small,orange!80!black] {$-\gamma_\alpha(m)$};
    % Dashed zero line
    \draw[gray!45,dashed] (axis cs:0,0)--(axis cs:10,0);
    % "not certified" label
    \node[orange!75!black,font=\small,align=center]
      at (axis cs:2.5,0) {not\\certified};
    % Certified parts (solid)
    \addplot[RoyalBlue,very thick,domain=0:10,
             restrict y to domain=-10:-0.6,samples=300]
      {1.5*sin(deg(0.9*x - 1.0))};
    \addplot[RoyalBlue,very thick,domain=0:10,
             restrict y to domain=0.6:10,samples=300]
      {1.5*sin(deg(0.9*x - 1.0))};
    % Non-certified part (dashed)
    \addplot[RoyalBlue,very thick,dashed,domain=0:10,
             restrict y to domain=-0.6:0.6,samples=300]
      {1.5*sin(deg(0.9*x - 1.0))};
    % C_alpha(m) brackets
    \draw[<->,green!55!black,thick]
      (axis cs:3.5,0.6)--(axis cs:3.5,1.85);
    \node[green!55!black,font=\small,right]
      at (axis cs:3.55,1.22) {$\mathcal{C}_\alpha(m)$};
    \draw[<->,green!55!black,thick]
      (axis cs:0.25,-0.6)--(axis cs:0.25,-1.85);
    \node[green!55!black,font=\small,right]
      at (axis cs:0.3,-1.22) {$\mathcal{C}_\alpha(m)$};
    \node[black,font=\small,left]
      at (axis cs:0,0) {0};
  \end{axis}
  \end{tikzpicture}}
  \label{fig:certification}
  \smallskip
  \begin{minipage}{0.88\linewidth}
    \footnotesize\textit{\textbf{Notes:}} The blue curve depicts a treatment effect function $\tau(x)$. 
    The orange shaded band is the non-certified region $\{x:|\tau(x)|\le\gamma_\alpha(m)\}$. 
    Solid blue segments outside the band form the certified set $\mathcal{C}_\alpha(m)$, marked by the green brackets. 
    The dashed blue segments correspond to portions of the curve that fall inside the non-certified band.
  \end{minipage}
\end{figure}

Figure \ref{fig:certification} illustrates the definition. 
The innovation effect function $\tau(x)$ (in blue) crosses the threshold $\gamma_\alpha(m)$ at several points. 
The certified set $\mathcal{C}_\alpha(m)$ consists of those $x$ where $|\tau(x)|$ exceeds $\gamma_\alpha(m)$. 
The certification boundary $\gamma_\alpha(m)$ measures how far the innovation effect must lie from zero for the decision to be certified. 
Conditional on the event that the realized target draw falls into the certified set, the probability of making a mistake with $\hat{d}_m(X_j)$ is controlled at $\alpha$. 
At points inside the orange band, that is, outside the certification set, $|\tau(x)|\le\gamma_\alpha(m)$, and no such certification guarantee is imposed.

The next Lemma notes that certification's parameters $\gamma_\alpha(m)$ and $\alpha$ imply an affordability constraint.

\begin{lemma}[Affordability of Certified Decisions] \label{lem:compensation}
    Under Assumption \ref{ass:model}, and for a choice of $m(\cdot)$ that satisfies \eqref{eq:certification},
    \begin{equation}
        \sup_{Q \in \mathcal{P}, f_0,f_1 \in \mathcal{F}_c}\mathbb E_{Q^J \times P^N}[\ell(X_j) \mid \mathbf{X}_N] \leq \max\{\gamma_\alpha(m),\alpha \bar\tau\}.
    \end{equation}
    As a consequence, $m(\cdot)$ achieves the PM's objective if:
    \begin{equation}
        \max\{\gamma_\alpha(m), \alpha \bar{\tau}\} \leq B_0.
    \end{equation}
\end{lemma}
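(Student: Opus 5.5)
Fix an arbitrary $Q\in\mathcal P$ and $f_0,f_1\in\mathcal F_c$, and bound the conditional expected loss by a bound that does not depend on $(Q,f_0,f_1)$; taking the supremum then gives the claim. Throughout, condition on $\mathbf X_N$ and write $\mathbb P$, $\mathbb E$ for the law under $Q^J\times P^N$. The natural tool is the law of total expectation, splitting on whether the target draw falls in the certified set $\mathcal C_\alpha(m)$:
\begin{equation*}
\mathbb E[\ell(X_j)\mid\mathbf X_N]
= \mathbb P(X_j\in\mathcal C_\alpha(m)\mid\mathbf X_N)\,\mathbb E[\ell(X_j)\mid X_j\in\mathcal C_\alpha(m),\mathbf X_N]
+ \mathbb P(X_j\notin\mathcal C_\alpha(m)\mid\mathbf X_N)\,\mathbb E[\ell(X_j)\mid X_j\notin\mathcal C_\alpha(m),\mathbf X_N].
\end{equation*}
If one of the two events has probability zero, the corresponding term is simply dropped.

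\emph{Outside the certified set.} Here $|\tau(X_j)|\le\gamma_\alpha(m)$ by definition of $\mathcal C_\alpha(m)$. Since the indicator in \eqref{eq:loss} is at most one, $\ell(X_j)\le\gamma_\alpha(m)$ pointwise, so the second conditional expectation is at most $\gamma_\alpha(m)$.

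\emph{Inside the certified set.} Under Assumption \ref{ass:model}, $|\tau_j|\le\bar\tau$, so $|\tau(x)|=|\mathbb E_Q[\tau_j\mid X_j=x]|\le\bar\tau$. Hence $\ell(X_j)\le\bar\tau\,\mathbf 1\{d^*(X_j)\neq\hat d_m(X_j)\}$, and the first conditional expectation is at most $\bar\tau$ times the conditional misclassification probability. By \eqref{eq:certification}, that probability is at most $\alpha$, since the supremum there dominates the value at our particular $(Q,f_0,f_1)$. This gives a bound of $\alpha\bar\tau$.

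\emph{Combining.} The total is a convex combination, with weights $p=\mathbb P(X_j\in\mathcal C_\alpha(m)\mid\mathbf X_N)$ and $1-p$, of two numbers bounded by $\alpha\bar\tau$ and $\gamma_\alpha(m)$. It is therefore at most $\max\{\gamma_\alpha(m),\alpha\bar\tau\}$. This bound is uniform in $(Q,f_0,f_1)$, so taking suprema yields the displayed inequality. The second claim then follows immediately by comparing with Definition \ref{def:affordable_decisions}.

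The only delicate step is being careful about what the certification condition controls. The probability in \eqref{eq:certification} is taken jointly over the target draw and the donor outcomes, conditional on $\mathbf X_N$ and on $X_j\in\mathcal C_\alpha(m)$. The bound $\ell\le\bar\tau\,\mathbf 1\{\cdot\}$ must be integrated under that same conditional law, so that $\alpha$ applies directly. I would also check that $\mathcal C_\alpha(m)$ is deterministic given $(f_0,f_1,Q)$, since $\tau$ is a population object, so conditioning on it is legitimate. Beyond that, the argument is routine.
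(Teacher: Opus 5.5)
Your proposal is correct and follows essentially the same argument as the paper. You condition on $\mathbf X_N$, split by the law of total expectation on $\{X_j\in\mathcal C_\alpha(m)\}$, bound the outside term by $\gamma_\alpha(m)$ and the inside term by $\alpha\bar\tau$ via \eqref{eq:certification}, and conclude because a convex combination is at most the maximum. The paper treats the degenerate cases $\mathbb P(X_j\in\mathcal C_\alpha(m)\mid\mathbf X_N)\in\{0,1\}$ separately, which matches your ``drop the term'' convention, and your Jensen step $|\tau(x)|\le\bar\tau$ just makes explicit what the paper uses implicitly.
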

\hyperref[proof:lem:compensation]{The formal proof is in Appendix \ref{app:formal_proofs}.}
Lemma \ref{lem:compensation} shows that the worst-case compensation cost is bounded above by the maximum between $\gamma_\alpha(m)$ and $\alpha\bar{\tau}$. 
Depending on whether $\gamma_\alpha(m) \gtrless \alpha \bar{\tau}$, the bound is sharp under either of the following sufficient conditions. First, if $\alpha\bar\tau \geq \gamma_\alpha(m)$, there exists $(Q^\star,x^\star)\in \mathcal P \times \mathcal C_\alpha(m)$ such that $Q_X^\star=\delta_{x^\star}$, $|\tau(x^\star)|=\bar\tau$, and
\begin{equation}
  \mathbb P_{(Q^\star)^J \times P^N}\bigl(d^*(X_j)\neq \hat d_m(X_j)\mid X_j\in \mathcal C_\alpha(m), \mathbf{X}_N\bigr)=\alpha.
\end{equation}
Second, if $\alpha\bar\tau \leq \gamma_\alpha(m)$, there exists $(Q^\dagger,x^\dagger)\in \mathcal P \times \mathcal C_\alpha(m)^c$ such that $Q_X^\dagger=\delta_{x^\dagger}$, $|\tau(x^\dagger)|=\gamma_\alpha(m)$, and
\begin{equation}
  \mathbb P_{(Q^\dagger)^J \times P^N}\bigl(d^*(X_j)\neq \hat d_m(X_j)\mid X_j\notin \mathcal C_\alpha(m), \mathbf{X}_N\bigr)=1.
\end{equation}
Under either condition, applied to the respective scenario, the upper bound in Lemma \ref{lem:compensation} is attained with equality. 
The intuition is that, because the PM commits to a choice of $m$ \textit{before} $Q$ realizes and becuase the adversary picks the worst-case $Q$ given the collection design $\mathbf{X}_N$, it can concentrate all probability mass on the covariate value at which the corresponding local upper bound is attained.

\section{One Solution to the Decision Problem} \label{sec:one_solution}
This section proceeds in three steps. 
I first state a general sufficient condition for certification, then I introduce a class of estimators $\mathcal M_w$ and derive finite sample certification conditions specific to this class. Finally, I characterize the asymptotic certification conditions and its affordability implication.

The following Lemma introduces a sufficient condition for a general estimator $m(\cdot)$ to provide certified decisions.
\begin{lemma}[Sufficient Condition for Certification]\label{lem:suff_cert}
Let $\mathcal C_\alpha(m):=\{x\in\mathcal X:|\tau(x)|>\gamma_\alpha(m)\}$.
Any estimator $m \in \mathcal{M}$ that satisfies
\begin{equation}\label{eq:suff_cert_error_bound}
\sup_{Q \in \mathcal{P}}\sup_{f_0,f_1 \in \mathcal{F}_c}
\mathbb{P}_{Q^J \times P^N}\left(|\hat{\tau}_m(X_j)-\tau(X_j)|\geq \gamma_\alpha(m) \mid X_j \in \mathcal{C}_\alpha(m), \mathbf{X}_N\right)\leq \alpha,
\end{equation}
also satisfies
\begin{equation}
\sup_{Q \in \mathcal{P}}\sup_{f_0,f_1 \in \mathcal{F}_c}
\mathbb{P}_{Q^J \times P^N}(d^*(X_j)\neq \hat d_m(X_j) \mid X_j \in \mathcal{C}_\alpha(m), \mathbf{X}_N)\leq \alpha.
\end{equation}
\end{lemma}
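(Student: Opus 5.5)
The plan is an event-inclusion argument: on the conditioning event a wrong decision forces a large estimation error, so the mistake probability is dominated by the error probability bounded in \eqref{eq:suff_cert_error_bound}. Fix $Q\in\mathcal P$, $f_0,f_1\in\mathcal F_c$ and the collection design $\mathbf X_N$. The goal is to show
\begin{equation*}
\{d^*(X_j)\neq \hat d_m(X_j)\}\cap\{X_j\in\mathcal C_\alpha(m)\}\subseteq\{|\hat\tau_m(X_j)-\tau(X_j)|\geq\gamma_\alpha(m)\}\cap\{X_j\in\mathcal C_\alpha(m)\}.
\end{equation*}
Granting this, monotonicity of the conditional probability $\mathbb P_{Q^J\times P^N}(\cdot\mid X_j\in\mathcal C_\alpha(m),\mathbf X_N)$ bounds the mistake probability by the error probability. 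Taking suprema over $Q$ and $f_0,f_1$ on both sides and invoking \eqref{eq:suff_cert_error_bound} then gives the claim.

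To prove the inclusion, work pointwise on an outcome with $x=X_j\in\mathcal C_\alpha(m)$, so that $|\tau(x)|>\gamma_\alpha(m)\ge0$, and suppose the decision is wrong. There are two cases.

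\emph{Case $\tau(x)>\gamma_\alpha(m)$.} Here $d^*(x)=1$, so a mistake means $\hat d_m(x)=0$, i.e.\ $\hat\tau_m(x)<0$. Then
\begin{equation*}
|\hat\tau_m(x)-\tau(x)|\ge\tau(x)-\hat\tau_m(x)>\tau(x)>\gamma_\alpha(m).
\end{equation*}

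\emph{Case $\tau(x)<-\gamma_\alpha(m)$.} Here $d^*(x)=0$, so a mistake means $\hat\tau_m(x)\ge0$. Then
\begin{equation*}
|\hat\tau_m(x)-\tau(x)|\ge\hat\tau_m(x)-\tau(x)\ge -\tau(x)>\gamma_\alpha(m).
\end{equation*}

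In both cases the error exceeds $\gamma_\alpha(m)$, and the weak inequality used in \eqref{eq:suff_cert_error_bound} gives some slack. The case $\tau(x)=0$ cannot occur on $\mathcal C_\alpha(m)$, so the tie convention in $\mathbf 1\{\cdot\ge0\}$ does not matter.

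The only delicate point is measure-theoretic rather than substantive. The set $\mathcal C_\alpha(m)$ depends on $\tau$ and hence on $f_0,f_1$, so the conditioning event changes with the adversary's choice. The remedy is to perform the inclusion for each fixed $(Q,f_0,f_1)$ before taking suprema, which is why the argument is ordered as above. One should also note that the conditional probability is well defined only when $\mathbb P(X_j\in\mathcal C_\alpha(m)\mid\mathbf X_N)>0$; otherwise both sides are vacuous under the convention implicit in Definition~\ref{def:PM_objective}. Finally, $\hat\tau_m(X_j)$ is well defined because $N_{X_j}\ge1$ whenever $X_j$ is realized.
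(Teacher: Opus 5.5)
Your proposal is correct and follows the paper's proof essentially verbatim. Like the paper, you fix $(Q,f_0,f_1)$ and $\mathbf X_N$, show that a wrong decision on $\{X_j\in\mathcal C_\alpha(m)\}$ forces $|\hat\tau_m(X_j)-\tau(X_j)|\ge|\tau(X_j)|>\gamma_\alpha(m)$, and then conclude by monotonicity and taking suprema. Your explicit split on the sign of $\tau(x)$ handles the tie convention in $\mathbf 1\{\cdot\ge 0\}$ more carefully than the paper's ``opposite signs'' phrasing, but the argument is the same.
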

\hyperref[proof:lem:suff_cert]{The formal proof is in Appendix \ref{app:formal_proofs}.}

Lemma \ref{lem:suff_cert} states that, if the probability of the absolute estimation error being larger than the certification boundary $\gamma_\alpha(m)$ is controlled by $\alpha$ conditional on the realized target draw falling in $\mathcal C_\alpha(m)$ and on the collection design $\mathbf{X}_N$, then the decision estimated through $m$ is certified.

\begin{remark}
The sufficient condition in Lemma \ref{lem:suff_cert} is conceptually simple but, in general, difficult to verify. It requires a uniform high-probability bound on the unobservable estimation error $|\hat{\tau}_m(x)-\tau(x)|$.
\end{remark}

This technical challenge motivates the class $\mathcal M_w$ defined below. 
For estimators in $\mathcal M_w$, the estimation error admits a decomposition into a geometric approximation term and two bounded stochastic terms. 
Each component can be controlled uniformly under Assumptions \ref{ass:model} and \ref{ass:distr}, yielding an explicit certification and affordability condition.

To obtain explicit finite-sample certification bounds, I first introduce a block construction and then define $\mathcal{M}_w$. 
Suppose that $N$ is a multiple of $n$, and partition the donor sample into $K_N:=N/n$ mutually exclusive blocks of size $n$:
\begin{equation}
S_{k}^n:=\{(X_{i,k},Y_{i,k}(1))\}_{i=1}^n,
\qquad k=1,\dots,K_N.
\end{equation}
Because the original donor observations are i.i.d. under $P^N$, the resulting blocks are mutually independent and each has law $P^n$. Let
\begin{equation}
\mathbf X_N^n:=\{X_{i,k}\}_{i\le n,\ k\le K_N}.
\end{equation}
For each donor block $k$, let
\begin{equation}
\hat\tau_{j,k,w}:=m_w(X_j,Y_j(0),S_{k}^n).
\end{equation}
The class of estimators $\mathcal{M}_w$ is then defined as follows.

\begin{definition}[Matching Estimators with Positive Weights]\label{def:matching_positive}
Fix a block size $n$. Let $\mathcal M_w\subseteq\mathcal M$ denote the class of block-level estimators defined by
\begin{equation}
\hat\tau_{j,k,w}
=
m_w(X_j,Y_j(0),S_k^n)
=
\sum_{i=1}^n w(X_j,X_{i,k})Y_{i,k}(1)-Y_j(0),
\end{equation}
where, for each $x\in\mathcal X$, the weights satisfy
\begin{equation}
w(x,X_{i,k})\ge 0,
\qquad
\sum_{i=1}^n w(x,X_{i,k})=1,
\qquad
\sum_{i=1}^n w(x,X_{i,k})X_{i,k}=x.
\end{equation}
The final estimate is then defined as:
\begin{equation}
\hat\tau_w(x)
:=
\frac{1}{K_N N_x}
\sum_{j=1}^J\sum_{k=1}^{K_N}
\hat\tau_{j,k,w}\mathbf 1\{X_j=x\},
\qquad N_x\ge 1.
\end{equation}
\end{definition}
Definition \ref{def:matching_positive} defines a class of matching estimators with linearly precise positive weights that sum to one.
The block-averaging structure divides the original donor sample in blocks, computes the estimate for each target within each block, and then averages the estimates between blocks.
This procedure allows for controlling the idiosyncratic components of the donors.

\begin{remark}[Domain of $\mathcal M_w$]\label{rem:domain_mw}
The constraints in Definition \ref{def:matching_positive} imply that positive affine-exact weights exist only if
\begin{equation}
    x\in \mathrm{conv}(\{X_i\}_{i=1}^n).
\end{equation}
Accordingly, a block-level estimator $m_w\in\mathcal M_w$ is well defined only on the convex hull of the donor block. In the finite-sample construction based on a partition of the full donor sample into $K_N$ blocks, the aggregate estimator $\hat\tau_w(x)$ is therefore defined on the random feasible set
\begin{equation}
    \mathcal X:=\bigcap_{k=1}^{K_N} \mathrm{conv}(\{X_{i,k}\}_{i=1}^n).
\end{equation}
All subsequent statements for $\mathcal M_w$ and for Delaunay matching are understood on this feasible region.
\end{remark}

The following Theorem leverages the property of the class $\mathcal{M}_w$ to derive finite-sample certification conditions.

\begin{theorem}[Finite-sample Certified Decisions]\label{thm:finite_dec}
Under Assumptions \ref{ass:model} and \ref{ass:distr}, for any $m_w\in\mathcal M_w$, and any $\alpha\in(0,1)$, define
\begin{equation}
\mathcal{C}_\alpha(m_w):= \left\{x :|\tau(x)|>\sup_{x\in \mathcal{X}}r_\alpha(x,m_w)\right\},
\end{equation}
then,
\begin{equation}
\sup_{Q \in \mathcal{P}, f_0, f_1 \in \mathcal{F}_c}\mathbb P_{Q^J\times P^N}\bigl(\hat{d}_m(X_j)\neq d^*(X_j) \mid X_j \in \mathcal{C}_{\alpha}(m_w),\mathbf X_N^n \bigr)\le \alpha,
\end{equation}
almost surely,
where,
\begin{equation}\label{eq:main_radius_bound}
r_\alpha(x,m_w)
=
\frac{c}{2K_N}\sum_{k=1}^{K_N} \sum_{i=1}^n \hat w_{i,k}(x)\|X_{i,k}-x\|^2
+
\frac{\bar U_1}{K_N}\sqrt{2\log(4/\alpha)\sum_{k=1}^{K_N}\sum_{i=1}^n \hat w_{i,k}(x)^2}
+
{\bar U_0}\sqrt{\frac{2\log(4/\alpha)}{N_x}},
\end{equation}
and $\hat{w}_{i,k}(x)$ is the estimated value of the weight $w(x,X_{i,k})$ defined in Defition \ref{def:matching_positive}.
\end{theorem}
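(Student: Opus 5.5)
The plan is to reduce the claim to Lemma \ref{lem:suff_cert}, with $\gamma_\alpha(m_w):=\sup_{x\in\mathcal X} r_\alpha(x,m_w)$. It then suffices to show that, for every fixed $x$ in the feasible region and conditional on $\mathbf X_N^n$ and on $X_j=x$,
\begin{equation*}
\mathbb P\bigl(|\hat\tau_w(x)-\tau(x)|\ge r_\alpha(x,m_w)\bigr)\le\alpha ,
\end{equation*}
uniformly over $Q\in\mathcal P$ and $f_0,f_1\in\mathcal F_c$. Since $r_\alpha(x,m_w)\le\gamma_\alpha(m_w)$, the event $|\hat\tau_w-\tau|\ge\gamma_\alpha$ is contained in this one. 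Integrating over the conditional law of $X_j$ given $X_j\in\mathcal C_\alpha(m_w)$ then gives \eqref{eq:suff_cert_error_bound}, and Lemma \ref{lem:suff_cert} delivers the stated bound.

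The first step is an error decomposition. Substituting Assumption \ref{ass:model} into Definition \ref{def:matching_positive} gives
\begin{equation*}
\hat\tau_w(x)-\tau(x)=\underbrace{\frac1{K_N}\sum_{k}\Bigl(\sum_i \hat w_{i,k}(x)f_1(X_{i,k})-f_1(x)\Bigr)}_{A}+\underbrace{\frac1{K_N}\sum_{k,i}\hat w_{i,k}(x)U_{i,k}(1)}_{B}-\underbrace{\frac1{N_x}\sum_{j:X_j=x}U_j(0)}_{C}.
\end{equation*}
This uses $\sum_i\hat w_{i,k}=1$ and the fact that $f_0(x)$ cancels against the mean of the target outcomes at $x$. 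For $A$, take a second-order Taylor expansion of $f_1$ around $x$. The linear term vanishes by linear precision, $\sum_i \hat w_{i,k}(X_{i,k}-x)=0$, and the remainder is bounded by $\frac c2\|X_{i,k}-x\|^2$ because $\|\nabla^2 f_1\|\le c$. Using positivity of the weights, $|A|$ is then bounded deterministically by the first term of $r_\alpha$. Note that $A$ does not involve $f_0$, which is why only $f_1$'s curvature enters.

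The second step handles the two stochastic terms. Conditional on $\mathbf X_N^n$, the $U_{i,k}(1)$ are independent across donors, mean zero by Assumption \ref{ass:distr}, and bounded by $\bar U_1$, while the weights are fixed functions of the design. Hoeffding's inequality therefore gives
\begin{equation*}
\mathbb P\Bigl(|B|\ge \tfrac{\bar U_1}{K_N}\sqrt{2\log(4/\alpha)\textstyle\sum_{k,i}\hat w_{i,k}^2}\Bigr)\le \alpha/2 ,
\end{equation*}
where the constant $4/\alpha$ comes from a two-sided bound at level $\alpha/2$. Similarly, the $U_j(0)$ with $X_j=x$ are conditionally i.i.d., mean zero and bounded by $\bar U_0$, so Hoeffding gives $\mathbb P\bigl(|C|\ge \bar U_0\sqrt{2\log(4/\alpha)/N_x}\bigr)\le\alpha/2$. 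A union bound combined with the triangle inequality then yields the pointwise bound displayed above. None of these bounds depend on $Q$, $f_0$ or $f_1$ beyond $c$, $\bar U_0$ and $\bar U_1$, which gives the required uniformity.

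The main obstacle is the conditioning. One must justify that, conditional on $\mathbf X_N^n$ and $X_j=x$, the weights are nonrandom and the errors remain mean zero and independent. This follows from the i.i.d. sampling together with $\mathbb E[U\mid X]=0$, although the Hoeffding constants should be checked against the exact sub-Gaussian parameter $\bar U^2$ used in the paper. A second subtlety is that $N_x$ is random. I would handle it by conditioning on the target covariates as well and then averaging, which preserves the bound because it holds for every value of $N_x\ge1$.
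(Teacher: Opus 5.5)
Your proposal is correct and follows essentially the same route as the paper: the same three-term error decomposition, a Taylor expansion in which linear precision removes the gradient term, Hoeffding at level $\alpha/2$ for each noise term, and a union bound. The Hoeffding constants you were unsure about do match the paper's, since the summands have ranges $2\hat w_{i,k}(x)\bar U_1/K_N$ and $2\bar U_0/N_x$ respectively. The only cosmetic difference is that you route the final ``mistake implies large error'' step through Lemma~\ref{lem:suff_cert}, whereas the paper re-derives that sign implication inline before averaging over $X_j\in\mathcal C_\alpha(m_w)$.
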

\hyperref[proof:thm:finite_dec]{The formal proof is in Appendix \ref{app:formal_proofs}.} 
For every feasible collection design according to $P_X$, Theorem \ref{thm:finite_dec} provides a certification guarantee for any adaptive weighting rule satisfying Definition \ref{def:matching_positive}. 
Conditional on the realized donor design, and uniformly over the admissible target populations and causal laws, the probability, over the donor idiosyncratic components and the target sample, of making a wrong decision at a certified target point is at most $\alpha$.

The guarantee depends on the scalar boundary $\sup_{x\in\mathcal X} r_\alpha(x,m_w)$.
It is composed of three terms.
First, a geometric term that measures the weighted distance between the target point and the donor points, averaged over the $K_N$ donor blocks. 
Second, an idiosyncratic term due to the individual component in Assumption \ref{ass:model} of donor units, which is bounded using a standard concentration inequality \citep[see][]{hoeffding_probability_1963}.
Third, a second idiosyncratic term due to the individual component of target units, bounded using the same technique.

The following corollary turns the certification guarantee of Theorem \ref{thm:finite_dec} into an affordability constraint.

\begin{corollary}[Worst-case compensation cost from finite sample decisions]\label{cor:aff_finite}
    By Theorem \ref{thm:finite_dec} and Lemma \ref{lem:compensation},
    \begin{equation}
      \sup_{Q \in \mathcal{P}, f_0,f_1 \in \mathcal{F}_c} \mathbb{E}_{Q^J \times P^N}[\ell(X_j)\mid\mathbf X_N^n] \leq \max\left\{\sup_{x\in\mathcal{X}} r_\alpha(x,m_w), \alpha \bar{\tau}\right\}.
    \end{equation}
    almost surely.
\end{corollary}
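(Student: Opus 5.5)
The corollary follows by composing Theorem \ref{thm:finite_dec} with Lemma \ref{lem:compensation}, with the scalar $\gamma_\alpha(m_w):=\sup_{x\in\mathcal X} r_\alpha(x,m_w)$ playing the role of the certification boundary. The plan is first to check that this $\gamma_\alpha(m_w)$ is an admissible $\alpha$-certification boundary in the sense of Definition \ref{def:PM_objective}, conditional on the block design $\mathbf X_N^n$. Then I will invoke the affordability bound of Lemma \ref{lem:compensation} with the conditioning set replaced by $\mathbf X_N^n$.

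\emph{Step 1.} The set $\mathcal C_\alpha(m_w)$ in Theorem \ref{thm:finite_dec} is exactly $\{x:|\tau(x)|>\gamma_\alpha(m_w)\}$. The theorem asserts that, almost surely in the design, the mistake probability conditional on $X_j\in\mathcal C_\alpha(m_w)$ and $\mathbf X_N^n$ is at most $\alpha$, uniformly over $Q\in\mathcal P$ and $f_0,f_1\in\mathcal F_c$. This is precisely condition \eqref{eq:certification}, with $\mathbf X_N^n$ in place of $\mathbf X_N$.

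\emph{Step 2.} I will check two technical points. First, $\gamma_\alpha(m_w)$ is a nonnegative scalar that is fixed once the design is fixed. It depends on $\mathbf X_N^n$ through the weights and on $N_x$, but not on $Q$ or $f_d$. If one worries that $N_x$ is random through the target sample, I would bound it by its worst case $N_x\ge 1$, or note that the theorem treats it as given. Second, $\gamma_\alpha(m_w)$ must be finite. This holds because $\mathcal X$ is a bounded intersection of convex hulls and the weights lie in $[0,1]$, so each term of \eqref{eq:main_radius_bound} is bounded.

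\emph{Step 3.} I will rerun the argument of Lemma \ref{lem:compensation} conditionally on $\mathbf X_N^n$. Split $\mathbb E[\ell(X_j)\mid \mathbf X_N^n]$ according to whether $X_j\in\mathcal C_\alpha(m_w)$:
\begin{equation}
\mathbb E[\ell\mid\cdot]\le \bar\tau\,\alpha\,\mathbb P(X_j\in\mathcal C_\alpha)+\gamma_\alpha(m_w)\,\mathbb P(X_j\notin\mathcal C_\alpha).
\end{equation}
On the certified set the loss is at most $|\tau|\le\bar\tau$, which holds by Assumption \ref{ass:model}, and it is incurred with probability at most $\alpha$. Off the certified set the loss is at most $|\tau(x)|\le\gamma_\alpha(m_w)$. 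The right-hand side is a convex combination of $\alpha\bar\tau$ and $\gamma_\alpha(m_w)$, so it is at most their maximum. Taking suprema over $Q$ and $f_0,f_1$ preserves the bound, and it holds almost surely because Step 1 does.

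\emph{Main obstacle.} The only real subtlety is the conditioning mismatch: Lemma \ref{lem:compensation} is stated given $\mathbf X_N$, while Theorem \ref{thm:finite_dec} conditions on $\mathbf X_N^n$. Since $\mathbf X_N^n$ is just a relabeling of $\mathbf X_N$ into blocks, the two generate the same information. The lemma's proof uses only the certification inequality and the bound $|\tau|\le\bar\tau$, so it transfers verbatim. I would state this explicitly rather than reprove the lemma.
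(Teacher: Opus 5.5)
Your proposal is correct and follows the paper's route: set $\gamma_\alpha(m_w)=\sup_{x\in\mathcal X} r_\alpha(x,m_w)$, use Theorem \ref{thm:finite_dec} to verify the certification condition \eqref{eq:certification} given $\mathbf X_N^n$, and apply Lemma \ref{lem:compensation}, whose proof transfers verbatim to this conditioning. One small caveat: your fallback of replacing $N_x$ by its worst case $N_x\ge 1$ would only give a weaker bound than the one stated, so the corollary as stated requires the other option, treating $N_x$ as given, as the paper implicitly does.
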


Because the policymaker commits to $m_w$ before the target distribution realizes, and the adversary can react to the realization of $\mathbf{X}_N^n$, a least-favorable $Q_X$ concentrates its mass on the covariate values where $r_\alpha(x,m_w)$ is largest. 
Hence, conditional on the donor design, the worst-case expected compensation is governed by the maximum between the supremum over $x$ of the certification boundary and $\alpha\bar{\tau}$.

The next Theorem let's the size of the donor and target samples grow and studies how the problem changes.

\begin{theorem}[Asymptotic certified decisions]
\label{thm:asymp_dec}
Under Assumptions \ref{ass:model} and \ref{ass:distr}, let the donor
sample size satisfy $N\to\infty$ with fixed block size $n$, so that
$K_N=N/n\to\infty$, and suppose also that $N_x\to\infty$.
Let $\alpha_N\downarrow 0$ be a sequence of certification probabilities satisfying
\begin{equation}
    \frac{\log(4/\alpha_N)}{K_N}\to 0,
    \qquad
    \frac{\log(4/\alpha_N)}{N_x}\to 0.
\end{equation}
Then, for any $m_w\in\mathcal M_w$,
\begin{equation}
r_{\alpha_N}(x,m_w)
=
\frac{c}{2K_N}
\sum_{k=1}^{K_N}\sum_{i=1}^n
\hat w_{i,k}(x)\|X_{i,k}-x\|^2
+o(1),
\end{equation}
almost surely. Consequently, defining
\begin{equation}
\mathcal C_{\alpha_N}(m_w)
:=
\left\{
x:
|\tau(x)|>
\sup_{x\in\mathcal X}r_{\alpha_N}(x,m_w)
\right\},
\end{equation}
it follows that
\begin{equation}
\sup_{Q\in\mathcal P}
\sup_{f_0,f_1\in\mathcal F_c}
\mathbb P_{Q^J\times P^N}
\left(
\hat d_m(X_j)\neq d^*(X_j)
\mid
X_j\in\mathcal C_{\alpha_N}(m_w),
\mathbf X_N^n
\right)
\longrightarrow 0,
\end{equation}
almost surely.
\end{theorem}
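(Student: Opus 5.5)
The plan is to start from the explicit radius in Theorem \ref{thm:finite_dec}, evaluated at $\alpha=\alpha_N$, and show that its two stochastic terms vanish. The geometric term is left untouched. The convergence of the mistake probability then follows from the finite-sample bound, since the right-hand side there is exactly $\alpha_N\downarrow 0$.

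\textbf{Step 1: the donor term.} For the second term of \eqref{eq:main_radius_bound}, I would use only the weight constraints in Definition \ref{def:matching_positive}. Because $\hat w_{i,k}(x)\ge 0$ and $\sum_i \hat w_{i,k}(x)=1$ in every block, we have $\sum_i \hat w_{i,k}(x)^2\le 1$. Summing over blocks gives
\begin{equation*}
\sum_{k=1}^{K_N}\sum_{i=1}^n \hat w_{i,k}(x)^2\le K_N .
\end{equation*}
The donor term is therefore at most
\begin{equation*}
\bar U_1\sqrt{2\log(4/\alpha_N)/K_N},
\end{equation*}
which tends to zero by the first rate condition. This bound holds deterministically for every realized design and uniformly in $x$, so the ``almost surely'' qualifier is inherited trivially.

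\textbf{Step 2: the target term.} The third term equals $\bar U_0\sqrt{2\log(4/\alpha_N)/N_x}$, which is $o(1)$ by the second rate condition. Together with Step 1, this gives the stated expansion of $r_{\alpha_N}(x,m_w)$.

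\textbf{Step 3: uniformity in $x$.} The certified set is defined through $\sup_{x\in\mathcal X}r_{\alpha_N}$, so the $o(1)$ must hold uniformly over $x$. This is the main obstacle. Step 1 is already uniform. For Step 2, I would read the assumption $N_x\to\infty$ as holding for the relevant target points, or as $\min_x N_x\to\infty$ over the finitely many cells where decisions are taken. Under that reading the $o(1)$ holds uniformly, and
\begin{equation*}
\sup_x r_{\alpha_N}(x,m_w)=\sup_x\frac{c}{2K_N}\sum_k\sum_i\hat w_{i,k}(x)\|X_{i,k}-x\|^2+o(1).
\end{equation*}
If uniformity fails, I would state the result pointwise and note that the certified set is defined with the full radius in any case, so the probability bound is unaffected.

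\textbf{Step 4: the mistake probability.} For each $N$, Theorem \ref{thm:finite_dec} applied with $\alpha=\alpha_N$ and the set $\mathcal C_{\alpha_N}(m_w)$ gives
\begin{equation*}
\sup_{Q,f_0,f_1}\mathbb P\bigl(\hat d_m(X_j)\neq d^*(X_j)\mid X_j\in\mathcal C_{\alpha_N}(m_w),\mathbf X_N^n\bigr)\le\alpha_N
\end{equation*}
almost surely. Since $\alpha_N\downarrow 0$, the supremum converges to zero almost surely, taking a countable intersection of the probability-one events over $N$.
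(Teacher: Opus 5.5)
Your proposal is correct and follows the paper's proof essentially step for step. Nonnegativity and the sum-to-one constraint give $\sum_{k=1}^{K_N}\sum_{i=1}^n \hat w_{i,k}(x)^2\le K_N$, so the first rate condition kills the donor term and the second kills the target term; the mistake-probability claim then follows from Theorem \ref{thm:finite_dec} at $\alpha=\alpha_N$, with a countable intersection of probability-one events. Your Step 3 on uniformity in $x$ is a small refinement the paper does not make, since it states the expansion only pointwise for each feasible $x$. You also correctly note that the probability conclusion does not depend on uniformity, because $\mathcal C_{\alpha_N}(m_w)$ is defined with the full radius.
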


\hyperref[proof:thm:asymp_dec]{The formal proof is in Appendix \ref{app:formal_proofs}.}
Theorem \ref{thm:asymp_dec} shows that the finite-sample certification radius becomes asymptotically equivalent to its realized geometric component. 
As the number of donor blocks and the number of target observations increase, the two stochastic terms vanish, while the geometric term remains evaluated at the realized donor design. 
The guarantee therefore retains the design-conditional interpretation of Theorem \ref{thm:finite_dec}: after observing the donor covariates, the least-favorable target distribution may concentrate on the points where the realized geometric approximation is weakest.

\begin{remark}[Fixed target groups]\label{rem:fixed_groups}
An alternative interpretation of the condition $N_x\to\infty$ is that the policymaker partitions the target population ex ante through a measurable map $\pi:\mathcal X\to\mathcal G$ and takes decisions at the group level, while still forming the matching prediction pointwise in $x$. Then the relevant replication condition is $N_g\to\infty$ for each group $g\in\mathcal G$, which makes the target-side control noise vanish within each group. This modification does not change the worst-case geometric criterion. Indeed, the certification boundary for a group $g$ is
\begin{equation}
\sup_{x\in \pi^{-1}(g)} r_{\alpha_{N_g}}(x,m_w),
\end{equation}
so that
\begin{equation}
\sup_{g\in\mathcal G}\sup_{x\in \pi^{-1}(g)} r_{\alpha_{N_g}}(x,m_w)
=
\sup_{x\in\mathcal X} r_{\alpha_{N_g}}(x,m_w).
\end{equation}
Because the grouping rule is fixed before the target law realizes, a least-favorable $Q$ can still concentrate its mass on covariate values with the worst geometry. Appendix \ref{app:formal_proofs} gives a formal proof of this observation.
\end{remark}

The following corollary translates the asymptotic certification result into an affordability guarantee. 
As in Corollary \ref{cor:aff_finite}, because the policymaker commits to $m_w$ after observing the donor design but before the target distribution realizes, a least-favorable $Q_X$ may concentrate its mass on the covariate values where the realized certification radius is largest. 
Asymptotically, both stochastic terms in the certification radius and the term $\alpha_N\bar\tau$ vanish, so only the realized geometric component remains.

\begin{corollary}[Worst-case compensation cost from large sample decisions]\label{cor:aff_asympt}
  By Corollary \ref{cor:aff_finite}, applied with $\alpha=\alpha_N$,
    \begin{equation}
      \sup_{Q \in \mathcal{P}, f_0,f_1 \in \mathcal{F}_c}
      \mathbb{E}_{Q^J \times P^N}[\ell(X_j)\mid \mathbf{X}_N^n]
      \leq
      \max\left\{
      \sup_{x\in\mathcal X}r_{\alpha_N}(x,m_w),
      \alpha_N\bar\tau
      \right\}
    \end{equation}
    almost surely. Under the conditions of Theorem \ref{thm:asymp_dec}, it follows that
    \begin{equation}
      \sup_{Q \in \mathcal{P}, f_0,f_1 \in \mathcal{F}_c}
      \mathbb{E}_{Q^J \times P^N}[\ell(X_j)\mid \mathbf{X}_N^n]
      \leq
      \sup_{x\in\mathcal X}
      \frac{c}{2K_N}
      \sum_{k=1}^{K_N}\sum_{i=1}^n
      \hat w_{i,k}(x)\|X_{i,k}-x\|^2
      +o(1)
    \end{equation}
    almost surely.
\end{corollary}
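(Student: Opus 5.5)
The plan is a two-step argument: apply Corollary \ref{cor:aff_finite} along the sequence $\alpha=\alpha_N$, then use Theorem \ref{thm:asymp_dec} to drop the stochastic terms. The key observation is that Corollary \ref{cor:aff_finite} holds for every fixed $\alpha\in(0,1)$, almost surely, conditional on $\mathbf X_N^n$. Nothing in it depends on $\alpha$ beyond the radius $r_\alpha$. So for each $N$ we may instantiate it at $\alpha=\alpha_N\in(0,1)$. This gives the first display directly. Since there are countably many $N$, a countable intersection of probability-one events keeps the statement almost sure simultaneously for all $N$.

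Next we handle the two terms inside the maximum separately. For the second term, $\alpha_N\downarrow 0$ and $\bar\tau<\infty$ by Assumption \ref{ass:model}, so $\alpha_N\bar\tau=o(1)$ deterministically. For the first term, Theorem \ref{thm:asymp_dec} gives
\[
r_{\alpha_N}(x,m_w)=G_N(x)+o(1),
\qquad
G_N(x):=\frac{c}{2K_N}\sum_{k=1}^{K_N}\sum_{i=1}^n\hat w_{i,k}(x)\|X_{i,k}-x\|^2 .
\]
We then use the elementary bound $\max\{a+\epsilon_N,\delta_N\}\le a+\max\{\epsilon_N,\delta_N\}$ for $a\ge 0$. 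Here $a=\sup_x G_N(x)\ge 0$ because the weights are nonnegative, and $\epsilon_N,\delta_N\to 0$. This yields the second display.

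The main obstacle is passing the remainder through the supremum over $x$. Theorem \ref{thm:asymp_dec} is stated pointwise in $x$, and pointwise $o(1)$ does not automatically give $\sup_x(G_N+o(1))=\sup_x G_N+o(1)$. I would return to the explicit form \eqref{eq:main_radius_bound} and bound the two stochastic terms uniformly in $x$.

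For the donor term, nonnegativity and summation to one give $\sum_i\hat w_{i,k}(x)^2\le\sum_i\hat w_{i,k}(x)=1$. The term is therefore at most $\bar U_1\sqrt{2\log(4/\alpha_N)/K_N}$, uniformly in $x$, and this vanishes by the rate condition.

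For the target term, $\bar U_0\sqrt{2\log(4/\alpha_N)/N_x}$ vanishes uniformly provided $\inf_x N_x\to\infty$ at the stated rate over the relevant support. This is the case under the fixed-groups reading of Remark \ref{rem:fixed_groups}, where $\min_g N_g\to\infty$ over finitely many groups. I would state this uniform interpretation of $N_x\to\infty$ explicitly as the condition under which the conclusion holds.

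With both remainders uniform in $x$, the supremum of $r_{\alpha_N}$ equals $\sup_x G_N(x)+o(1)$ almost surely, and the corollary follows.
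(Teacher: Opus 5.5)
Your proposal follows the paper's own (implicit) argument: instantiate Corollary \ref{cor:aff_finite} at $\alpha=\alpha_N$, then let $\alpha_N\bar\tau$ and the two stochastic terms of $r_{\alpha_N}(x,m_w)$ vanish as in Theorem \ref{thm:asymp_dec}, with the donor term bounded uniformly in $x$ through $\sum_i\hat w_{i,k}(x)^2\le 1$, exactly as in the paper's proof of that theorem. Your extra requirement that the target-side rate condition hold uniformly over the relevant support (e.g.\ finitely many groups with $\min_g N_g\to\infty$, as in Remark \ref{rem:fixed_groups}) is what is needed to pass the pointwise $o(1)$ through $\sup_{x\in\mathcal X}$, a step the paper leaves unstated, so it sharpens the argument rather than creating a gap.
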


\section{Delaunay Matching Optimality}\label{sec:delaunay}

In Theorem \ref{thm:asymp_dec} and Corollary \ref{cor:aff_asympt}, the estimator enters asymptotically only through the realized geometric term
\begin{equation}
\frac{c}{2K_N}
\sum_{k=1}^{K_N}\sum_{i=1}^n
\hat w_{i,k}(x)\|X_{i,k}-x\|^2.
\end{equation}
This section characterizes the $m_w \in \mathcal{M}_w$ that minimizes that quantity.

\begin{definition}[Delaunay triangulation]\label{def:delaunay_triangulation}
For a donor block $k$, a triangulation $T_k$ of $\mathrm{conv}(\{X_{i,k}\}_{i=1}^n)$ is a Delaunay triangulation if every simplex $\sigma\in T_k$ has a circumsphere whose interior contains no donor $X_{i,k}$. In two dimensions, circumspheres reduce to circumcircles, yielding the empty circumcircle property illustrated in Figure \ref{fig:delaunay_empty_circumcircle}.
\end{definition}

\begin{figure}[h]
  \centering
  \caption{Delaunay Triangulation}
  \label{fig:delaunay_empty_circumcircle}
  \includegraphics[width=0.62\linewidth]{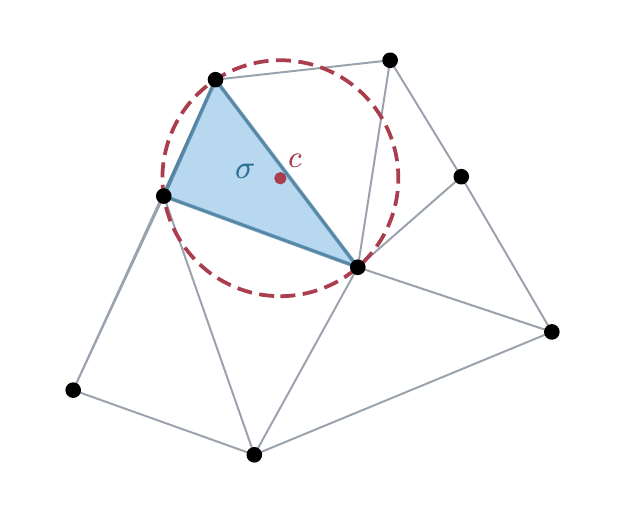}
\smallskip
\begin{minipage}{0.88\linewidth}
  \footnotesize\textit{\textbf{Notes:}} The gray edges depict a Delaunay triangulation of donor covariate locations in $\mathbb R^2$. For the highlighted simplex $\sigma$, the dashed circumcircle has center $c$ and contains no donor point in its interior. This is the two-dimensional version of Definition \ref{def:delaunay_triangulation}.
\end{minipage}
\end{figure}

Throughout, donor covariate locations are assumed distinct and in general position, so the Delaunay triangulation is simplicial \citep{Delaunay_1934aa}.

For each donor block $k$, let $T_{\mathrm{del},k}$ denote the Delaunay triangulation of $\{X_{i,k}\}_{i=1}^n$. For any $x\in \mathrm{conv}(\{X_{i,k}\}_{i=1}^n)$, let
\begin{equation}
\sigma_k^{\mathrm{del}}(x)=\mathrm{conv}(V_{1,k}(x),\dots,V_{d+1,k}(x))\in T_{\mathrm{del},k}
\end{equation}
be any Delaunay simplex containing $x$, and let $\hat w_{i,k}^{\mathrm{del}}(x)$ denote the corresponding barycentric weights, with zero weight assigned to donor points outside the selected simplex.\footnote{Because the triangulation is simplicial, if $x$ lies on a shared face, the resulting weight vector is independent of which containing simplex is selected.} 
Let $m_{\mathrm{del}}\in\mathcal M_w$ denote the Delaunay Matching Estimator (DME), that is the matching estimator induced by these Delaunay weights.

\begin{theorem}[Pointwise optimality of Delaunay weights]\label{thm:delaunay_budget}
For each donor block $k$ and each feasible target point $x\in \mathrm{conv}(\{X_{i,k}\}_{i=1}^n)$,
\begin{equation}
\{\hat w_{i,k}^{\mathrm{del}}(x)\}_{i=1}^n
\in
\argmin_{w\in\mathbb R^n}
\left\{
\sum_{i=1}^n w_i\|X_{i,k}-x\|^2:
w_i\ge 0,\;
\sum_{i=1}^n w_i=1,\;
\sum_{i=1}^n w_iX_{i,k}=x
\right\}.
\end{equation}
\end{theorem}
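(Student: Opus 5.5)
The plan is the standard lifting argument. Fix a block $k$ and drop the index $k$. Lift each donor to the paraboloid in $\mathbb R^{d+1}$ via $X_i\mapsto (X_i,\|X_i\|^2)$. Under the constraints $\sum_i w_i=1$ and $\sum_i w_iX_i=x$, expanding the square gives
\begin{equation}
\sum_{i=1}^n w_i\|X_i-x\|^2=\sum_{i=1}^n w_i\|X_i\|^2-\|x\|^2 .
\end{equation}
The term $\|x\|^2$ does not depend on $w$. So the program is equivalent to minimizing the linear functional $\sum_i w_i\|X_i\|^2$ over the polytope of feasible weights. In words, we seek the lowest point above $x$ of the convex hull of the lifted points. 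Its optimal value is the lower convex envelope $g(x):=\min\{\sum_i w_i\|X_i\|^2\}$ evaluated at $x$, and $g$ is convex and piecewise linear on $\mathrm{conv}(\{X_i\})$.

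The key step is the classical fact that the projection of the lower facets of the lifted hull is the Delaunay triangulation. I would prove directly the direction needed here. Take the Delaunay simplex $\sigma(x)$ with vertices $V_1,\dots,V_{d+1}$, circumcenter $z$ and circumradius $\rho$. Define the affine function
\begin{equation}
h(y):=2\langle z,y\rangle+\rho^2-\|z\|^2 .
\end{equation}
Then $\|y\|^2-h(y)=\|y-z\|^2-\rho^2$, so $h$ coincides with the paraboloid at every vertex $V_l$. The empty circumsphere property of Definition \ref{def:delaunay_triangulation} gives $\|X_i-z\|\ge\rho$ for all donors, so $h(X_i)\le\|X_i\|^2$ for every $i$.

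Next comes weak duality. For any feasible $w$, linearity of $h$ together with $\sum_i w_i=1$ and $\sum_i w_iX_i=x$ gives
\begin{equation}
\sum_i w_i\|X_i\|^2\ge \sum_i w_i h(X_i)=h(x).
\end{equation}
Now take the barycentric weights $\hat w^{\mathrm{del}}(x)$. They are feasible: they are nonnegative because $x\in\sigma(x)$, they sum to one, and they reproduce $x$. They also put mass only on the vertices $V_l$, where $h=\|\cdot\|^2$. Hence they attain $\sum_l \hat w_l\|V_l\|^2=\sum_l\hat w_l h(V_l)=h(x)$, the lower bound. So the Delaunay weights belong to the argmin, which is the claim. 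The footnote's observation about shared faces ensures the choice of containing simplex is immaterial, since each gives the same weights.

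The main obstacle is mild. One point is that a Delaunay simplex containing $x$ exists for every $x$ in the hull; this holds because the triangulation covers $\mathrm{conv}(\{X_i\})$ by definition. The other is to state the circumsphere inequality carefully, where ``interior contains no donor'' gives exactly the non-strict inequality needed. Uniqueness is not claimed, so general position is only needed for the triangulation to be simplicial. Nothing from the earlier theorems is required; the result is purely geometric and holds blockwise. Summing over $k$ then shows that $m_{\mathrm{del}}$ minimizes the geometric term of Theorem \ref{thm:asymp_dec} pointwise.
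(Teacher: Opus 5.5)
Your proposal is correct and follows the paper's proof essentially step for step. It uses the same affine minorant $h(y)=2\langle z,y\rangle+\rho^2-\|z\|^2$ from the empty Delaunay circumsphere, the same bound $\sum_i w_i\|X_i\|^2\ge h(x)$ attained by the barycentric weights, and the same identity $\sum_i w_i\|X_i-x\|^2=\sum_i w_i\|X_i\|^2-\|x\|^2$; the only cosmetic difference is that you apply this identity first, whereas the paper proves optimality for $\sum_i w_i\|X_{i,k}\|^2$ and converts at the end.
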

\hyperref[proof:thm:delaunay_budget]{The formal proof is in Appendix \ref{app:formal_proofs}.}
Theorem \ref{thm:delaunay_budget} is a pointwise statement: at every feasible target point, Delaunay weights solve the local geometric approximation problem over the full class of positive affine-exact weights.

\begin{figure}[h]
  \centering
  \caption{Geometric illustration of the Delaunay optimality argument.}
  \label{fig:delaunay_pointwise_argument}
  \includegraphics[width=0.78\linewidth]{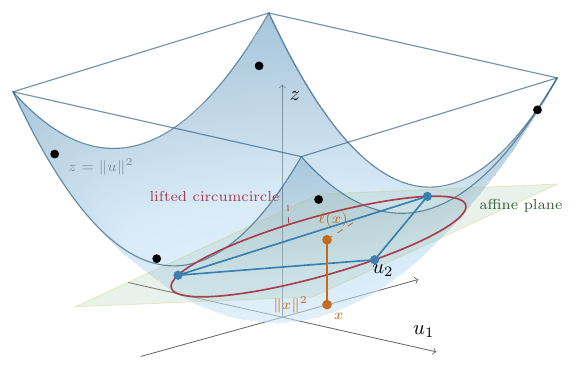}
\smallskip
\begin{minipage}{0.92\linewidth}
  \footnotesize\textit{\textbf{Notes:}} The horizontal axes $(u_1,u_2)$ index covariate space, while the vertical axis records the lifted height $z=\|u\|^2$. The donor points are lifted to the paraboloid, the blue simplex is the lift of the Delaunay simplex containing $x$, and the green plane is the affine interpolant through its lifted vertices. The crimson curve is the lifted circumcircle, namely the intersection of the paraboloid with that affine plane. Because the Delaunay circumsphere is empty, this plane lies weakly below every lifted donor point. At the target covariate value $x$, the plane has height $\ell(x)=\sum_i \hat w_i^{\mathrm{del}}(x)\|X_i\|^2$, while the paraboloid has height $\|x\|^2$. Their vertical gap is therefore $\sum_i \hat w_i^{\mathrm{del}}(x)\|X_i-x\|^2$. Any other feasible weights averaging to $x$ correspond to a convex combination of lifted donor points and so must produce a height weakly above the same plane.
\end{minipage}
\end{figure}

\begin{remark}[Geometric proof sketch]
Figure \ref{fig:delaunay_pointwise_argument} summarizes the argument. Fix a target point $x$. Every feasible set of weights $w$ determines a lifted point
\begin{equation}
\left(x,\sum_{i=1}^n w_i\|X_i\|^2\right)
\end{equation}
in the convex hull of the lifted donor points $(X_i,\|X_i\|^2)$. By the identity
\begin{equation}
\sum_{i=1}^n w_i\|X_i-x\|^2
=
\sum_{i=1}^n w_i\|X_i\|^2-\|x\|^2,
\end{equation}
the geometric criterion is exactly the vertical distance between that lifted point and the paraboloid height $\|x\|^2$. 
In the figure, the Delaunay simplex containing $x$ determines an affine plane whose intersection with the paraboloid is the lifted circumcircle. 
Because the circumcircle is empty (the defining condition for a triangulation to be Delaunay), that plane lies weakly below every lifted donor point. 
Therefore the Delaunay barycentric weights attain the lowest feasible lifted point above $x$, which proves the pointwise inequality of Theorem \ref{thm:delaunay_budget}. 
Averaging this pointwise dominance across the realized donor blocks and taking the supremum over $x$ gives Corollary \ref{cor:delaunay_affordability}.
\end{remark}

Because the dominance result in Theorem \ref{thm:delaunay_budget} holds block by block and point by point, it carries over directly to the realized geometric component of the asymptotic certification radius.

\begin{corollary}[Implication for asymptotic affordability]\label{cor:delaunay_affordability}
For every feasible realization of the donor covariates, every $m_w\in\mathcal M_w$, and every $x\in\mathcal X$,
\begin{equation}
\frac{c}{2K_N}
\sum_{k=1}^{K_N}\sum_{i=1}^n
\hat w_{i,k}^{\mathrm{del}}(x)\|X_{i,k}-x\|^2
\le
\frac{c}{2K_N}
\sum_{k=1}^{K_N}\sum_{i=1}^n
\hat w_{i,k}(x)\|X_{i,k}-x\|^2.
\end{equation}
Consequently,
\begin{equation}
\sup_{x\in\mathcal X}
\frac{c}{2K_N}
\sum_{k=1}^{K_N}\sum_{i=1}^n
\hat w_{i,k}^{\mathrm{del}}(x)\|X_{i,k}-x\|^2
\le
\sup_{x\in\mathcal X}
\frac{c}{2K_N}
\sum_{k=1}^{K_N}\sum_{i=1}^n
\hat w_{i,k}(x)\|X_{i,k}-x\|^2.
\end{equation}
Therefore, writing $\ell(X_j)$ for the loss induced by $m_{\mathrm{del}}$, Corollary \ref{cor:aff_asympt} gives
\begin{align}
&\sup_{Q \in \mathcal{P}, f_0,f_1 \in \mathcal{F}_c}
\mathbb{E}_{Q^J \times P^N}[\ell(X_j)\mid\mathbf X_N^n] \leq
\\
&\qquad\le
\sup_{x\in\mathcal X}
\frac{c}{2K_N}
\sum_{k=1}^{K_N}\sum_{i=1}^n
\hat w_{i,k}^{\mathrm{del}}(x)\|X_{i,k}-x\|^2
+o(1)
\\
&\qquad\le
\sup_{x\in\mathcal X}
\frac{c}{2K_N}
\sum_{k=1}^{K_N}\sum_{i=1}^n
\hat w_{i,k}(x)\|X_{i,k}-x\|^2
+o(1)
\end{align}
almost surely, for every $m_w\in\mathcal M_w$.
\end{corollary}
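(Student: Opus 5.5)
The proof follows the block-by-block, point-by-point route announced in the remark after Theorem \ref{thm:delaunay_budget}. First, fix a feasible realization of the donor covariates and a point $x\in\mathcal X=\bigcap_k \mathrm{conv}(\{X_{i,k}\}_{i=1}^n)$. For each block $k$, the realized weights $\{\hat w_{i,k}(x)\}_{i=1}^n$ of any $m_w\in\mathcal M_w$ satisfy, by Definition \ref{def:matching_positive}, exactly the three constraints appearing in the minimization problem of Theorem \ref{thm:delaunay_budget}: nonnegativity, summing to one, and reproducing $x$. They are therefore feasible for that program. Since $x\in\mathrm{conv}(\{X_{i,k}\}_{i=1}^n)$, Theorem \ref{thm:delaunay_budget} applies and yields
\begin{equation}
\sum_{i=1}^n \hat w^{\mathrm{del}}_{i,k}(x)\|X_{i,k}-x\|^2\le \sum_{i=1}^n \hat w_{i,k}(x)\|X_{i,k}-x\|^2 \qquad\text{for every } k.
\end{equation}
Summing over $k=1,\dots,K_N$ and multiplying by the nonnegative constant $c/(2K_N)$ preserves the inequality, which gives the first display.

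Second, the supremum statement follows by monotonicity. The pointwise inequality holds for every $x\in\mathcal X$, so the left-hand side is bounded by $\sup_{x\in\mathcal X}$ of the right-hand side. Taking the supremum on the left then gives the second display. One point needs care. The weights $\hat w_{i,k}(x)$ may be chosen adaptively, so the comparison has to be made at the same $x$ and for the same realized design before any supremum is taken. The argument above does exactly this, so adaptivity causes no difficulty. The supremum is finite because each term is bounded by the squared diameter of the convex hull.

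Third, for the affordability chain I would note that $m_{\mathrm{del}}\in\mathcal M_w$. The barycentric weights on a Delaunay simplex containing $x$ are nonnegative, sum to one and reproduce $x$, and they are well defined by the footnote on shared faces. Corollary \ref{cor:aff_asympt} therefore applies to $m_{\mathrm{del}}$ under the conditions of Theorem \ref{thm:asymp_dec}, and it produces the first inequality of the chain with the Delaunay geometric term plus $o(1)$. The last inequality is the supremum comparison from the second step, with the same $o(1)$ carried along.

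The main obstacle is purely formal. The $o(1)$ term in Corollary \ref{cor:aff_asympt} comes from the stochastic components of $r_{\alpha_N}$, and I need it to be uniform in $x$ so that it can be pulled outside the supremum. I would check this against the proof of Theorem \ref{thm:asymp_dec}. The donor-noise term is bounded uniformly because $\sum_i \hat w_{i,k}^2\le 1$, so it is at most $\bar U_1\sqrt{2\log(4/\alpha_N)/K_N}$. The target-noise term requires $N_x\to\infty$ uniformly over the relevant $x$, or else the grouped version in Remark \ref{rem:fixed_groups}. The substantive geometric content is entirely in Theorem \ref{thm:delaunay_budget}, which is taken as given.
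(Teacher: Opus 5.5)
Your proposal is correct and follows the paper's proof essentially step for step. The steps are: feasibility of the $m_w$ weights in the program of Theorem \ref{thm:delaunay_budget}, the blockwise comparison, summation and scaling by $c/(2K_N)$, the supremum over $x$, and then Corollary \ref{cor:aff_asympt} applied to $m_{\mathrm{del}}$. Your point that the $o(1)$ term must be uniform in $x$ (in particular that the target-noise term needs $N_x\to\infty$ uniformly) is a fair observation; the paper absorbs it into Corollary \ref{cor:aff_asympt} without comment, so it does not affect the argument for this corollary as stated.
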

\hyperref[proof:cor:delaunay_affordability]{The formal proof is in Appendix \ref{app:formal_proofs}.}
Corollary \ref{cor:delaunay_affordability} shows that, conditional on the realized donor covariates, Delaunay Matching Estimator provides the best asymptotic affordability guarantee in the set of matching estimators with positive weights.

\section{Geometric Data Collection Plans}\label{sec:data_collection}
In this section, we suppose the PM wants to bring average worst-case compensation cost down by collecting new donor units.
Suppose the policymaker has committed to Delaunay Matching.
I focus on collection plans that preserve the current feasible region $\mathcal X$. 
Accordingly, an additional donor covariate assigned to block $k$ must lie in $\mathrm{conv}(\{X_{i,k}\}_{i=1}^n)$.

\begin{definition}[One-step collection plan]\label{def:collection_plan}
A one-step local collection plan is a measurable rule
\begin{equation}
\pi(\mathbf X_N^n)=(k,z),
\qquad
k\in\{1,\dots,K_N\},
\qquad
z\in \mathrm{conv}(\{X_{i,k}\}_{i=1}^n).
\end{equation}
The exact one-step collection problem is
\begin{align}
(k_N^\star,z_N^\star)
\in\argmin_{\substack{
    k\in\{1,\ldots,K_N\}\\
    z\in\operatorname{conv}(\{X_{i,k}\}_{i=1}^n)
}}
\sup_{x\in\mathcal X}
\frac{c}{2K_N}
\left[
    \sum_{\ell\neq k}\sum_{i=1}^n
    \widehat w_{i,\ell}^{\mathrm{del}}(x)
    \|X_{i,\ell}-x\|^2
    +
    \sum_{i=1}^{n+1}
    \widehat w_{i,k}^{\mathrm{del}}(x)
    \|X_{i,k}-x\|^2
\right].
\label{eq:exact_refinement}
\end{align}
\end{definition}

Unfortunately, this problem does not admit a closed-form solution in general and may be computationally infeasible to solve numerically.
However, we can use some standard results in computational geometry to find an approximate solution. 

For a simplex $\sigma$, let $r_{\mathrm{mc}}(\sigma)$ denote the radius of the smallest Euclidean ball containing it. 
Then Waldron's bound \citep{waldron} implies that, for every donor block $k$ and every $x\in\mathcal X$,
\begin{equation}
\sum_{i=1}^n \hat w_{i,k}^{\mathrm{del}}(x)\|X_{i,k}-x\|^2\le \max_{\sigma\in T_{\mathrm{del},k}} r_{\mathrm{mc}}(\sigma)^2.
\end{equation}
Moreover, if $\sigma\in T_{\mathrm{del},k}$ and $x\in \sigma$, the local loss on that simplex is maximized at the center of its minimum enclosing ball. 

For each donor block $k$, let
\begin{equation}
\sigma_k^\star
\in
\argmax_{\sigma\in T_{\mathrm{del},k}} r_{\mathrm{mc}}(\sigma).
\end{equation}
Then, 
\begin{equation}\label{eq:max_block_proxy}
\sup_{x\in\mathcal X}\frac{c}{2K_N}\sum_{k=1}^{K_N} \sum_{i=1}^n \hat w_{i,k}^{\mathrm{del}}(x)\|X_{i,k}-x\|^2
\le
\frac{c}{2}\max_{1\le k\le K_N}r_{\mathrm{mc}}(\sigma^\star_k)^2.
\end{equation}

To relate \eqref{eq:max_block_proxy} to the exact criterion in
\eqref{eq:exact_refinement}, fix a candidate pair \((k,z)\) and set
\(X_{n+1,k}=z\). In the following display,
\(\widehat w_{i,k}^{\mathrm{del}}(x)\) and \(T_{\mathrm{del},k}\)
are understood to be recomputed on the augmented block, while all
other blocks remain unchanged. Applying Waldron's bound block by block gives
\begin{align}
&\sup_{x\in\mathcal X}
\frac{c}{2K_N}
\left[
    \sum_{\ell\neq k}\sum_{i=1}^n
    \widehat w_{i,\ell}^{\mathrm{del}}(x)
    \|X_{i,\ell}-x\|^2
    +
    \sum_{i=1}^{n+1}
    \widehat w_{i,k}^{\mathrm{del}}(x)
    \|X_{i,k}-x\|^2
\right]
\notag\\
&\qquad\le
\frac{c}{2K_N}
\left[
    \sum_{\ell\neq k}
    \max_{\sigma\in T_{\mathrm{del},\ell}}
    r_{\mathrm{mc}}(\sigma)^2
    +
    \max_{\sigma\in T_{\mathrm{del},k}}
    r_{\mathrm{mc}}(\sigma)^2
\right]
\notag\\
&\qquad\le
\frac{c}{2}
\max_{1\leq\ell\leq K_N}
\max_{\sigma\in T_{\mathrm{del},\ell}}
r_{\mathrm{mc}}(\sigma)^2.
\label{eq:updated_max_block_proxy}
\end{align}
Therefore, choosing \((k,z)\) to minimize the largest
minimum-enclosing-ball radius after collection provides a conservative
approximation to the exact refinement problem.
It is conservative because the exact criterion averages block-specific geometric losses before taking the supremum over $x$, whereas the proxy first takes the supremum within each block and only then across blocks. 

\begin{definition}[Geometric Plan]\label{def:worst_simplex_center}
Let
\begin{equation}
\hat{k}
\in
\argmax_{1\le k\le K_N} \max_{\sigma\in T_{\mathrm{del},k}} r_{\mathrm{mc}}(\sigma)^2,
\end{equation}
and let $c^*_{\hat{k}}$ denote the center of the minimum enclosing ball of the simplex $\sigma_{\hat{k}}^*$. 
The geometric plan places the next donor covariate at
\begin{equation}
\hat{z}:={c}^*_{\hat{k}}.
\end{equation}
\end{definition}

Definition \ref{def:worst_simplex_center} is a greedy approximation to the exact refinement problem in Definition \ref{def:collection_plan}. 
It selects the block that binds the conservative upper bound \eqref{eq:max_block_proxy} and, within that block, places the new donor point at the location where the current simplex-level geometric loss is largest, that is, the center of the smallest circle that contains the largest simplex. 

\section{Semi-Synthetic Application}\label{sec:semi_synth}

This section builds on the experimental setting of \citet{muralidharan_building_2016}, who study the rollout of biometric Smartcards for NREGS and Social Security Pension payments in rural Andhra Pradesh. 
In the original experiment, the government randomized the order of Smartcard conversion across $296$ eligible mandals in eight districts, assigning $112$ mandals to treatment, $139$ to a buffer group, and $45$ to control. 
The buffer group was introduced to preserve a gap between treated and control mandals long enough to field endline surveys after rollout in treated areas but before rollout in control areas. 
The survey sample covered $880$ Gram Panchayats (GPs), with ten households per GP: six drawn from the NREGS jobcard frame and four from the pension beneficiary frame. 
The endline sample contains $8{,}114$ households. 

\subsection{Targeting Smartcards with Delaunay Matching}
The targeting exercise asks where a policymaker should deploy the Smartcard payment system in subgroups of the target population. 
I therefore treat GPs in control mandals as targets, GPs in treated mandals as donors, and aggregate the target population into $25$ empirical cells obtained by a $5\times 5$ quantile partition of two normalized baseline covariates: baseline log annual consumption and a GP-level baseline NREGS payment measure. 
I partition the original treatment group into $40$ mutually exclusive donor blocks that are balanced across the 25 target subgroups. 
Each donor block yields a Delaunay triangulation, and predictions are averaged across blocks.
In Figure \ref{fig:semiemp_donor_triangulations} I show four examples of such blocks and plot the Delaunay triangulation of each block.

\begin{figure}[h]
  \centering
  \caption{Donor Blocks and Delaunay Triangulations}
  \label{fig:semiemp_donor_triangulations}
  \includegraphics[width=0.88\linewidth]{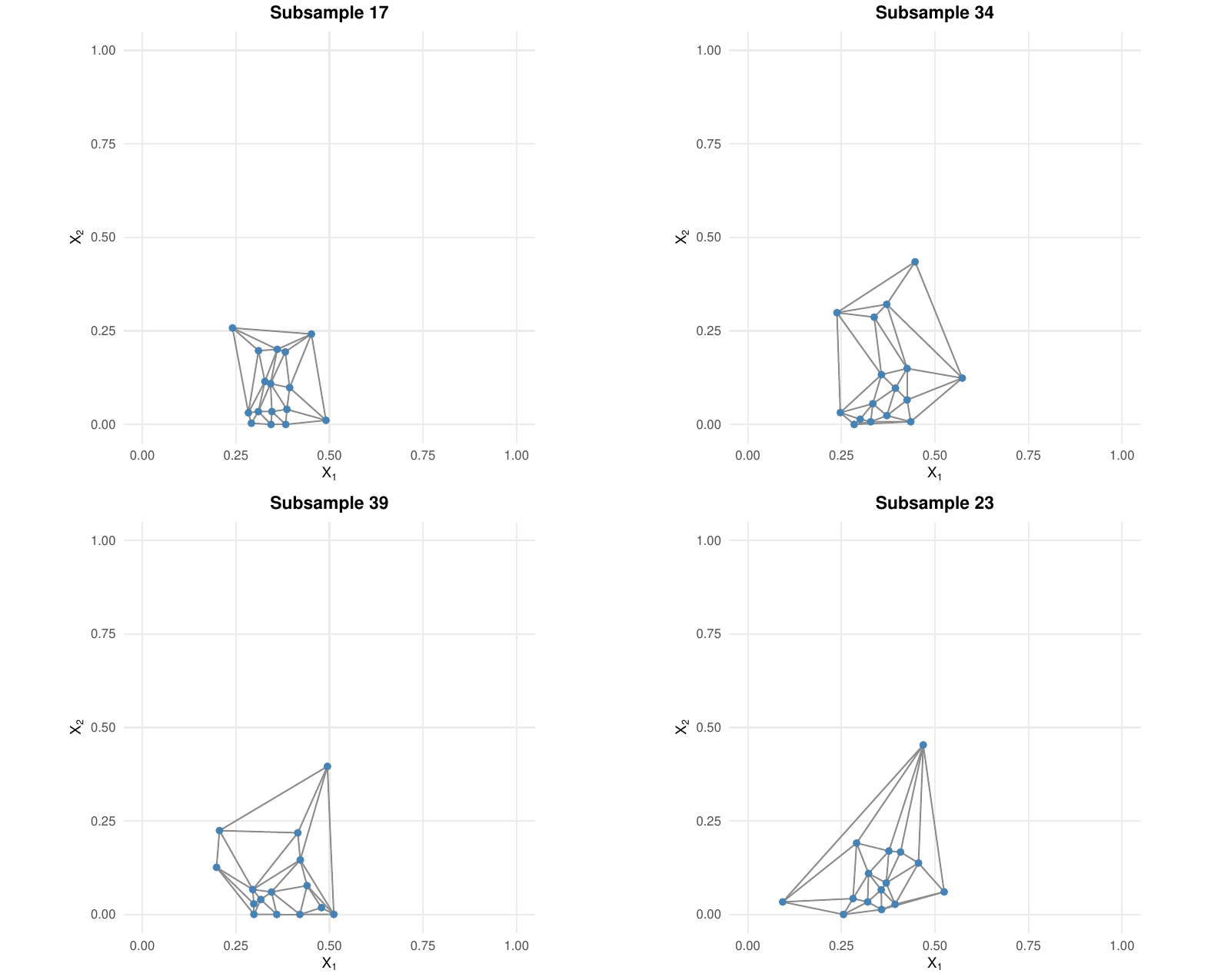}
\smallskip
\begin{minipage}{0.92\linewidth}
  \footnotesize\textit{\textbf{Notes:}} Each panel reports one donor block from the pooled-support partition used in the semi-synthetic application. Blue points are treated donor GPs and gray segments are Delaunay edges. The four blocks are selected to span the empirical distribution of the donor worst-simplex radius, so the figure visualizes how local geometric coverage varies across donor designs. In every panel, both covariates are rescaled to $[0,1]$.
\end{minipage}
\end{figure}

I impose a known synthetic treatment effect $\tau(x)$, scaled to the empirical dispersion of gains from lower leakage. 
In the main specification I consider a linear $\tau(x)$ whose frontier lies on the first principal-component direction of the target support to ensure that there is mass on both sides of the frontier and to rule out trivial decisions. 
In the left-hand panel of Figure \ref{fig:semiemp_tau_decisions} I plot the oracle decision frontier, highlighting in yellow the 25 target points, scaled by their relative size.
Note that this synthetic $\tau(x)$ creates a transparent frontier with some cells far from the decision boundary and others close to it, making it possible to see whether the certification rule expands where geometry is favorable and contracts where the problem is intrinsically harder.

Within each target point, I estimate $\tau(x)$ using DME as defined in Section \ref{sec:delaunay} and assign each cell to the innovation if the estimated intervention's effect is positive.

In the right-hand panel of Figure \ref{fig:semiemp_tau_decisions} I plot the donor and target covariate space, where the latter is color-coded according to the decision made.
Blue target points are assigned to the status quo, while red target points are assigned to the innovation. 
Black target points lie outside the convex hull of at least one donor block, and gray target points are not asymptotically certified.
Note that gray points lie close to the oracle frontier, and black points lie at the corners of the covariate space.

\begin{figure}[h]
  \centering
  \caption{Treatment-effect surface and Covariate Space}
  \label{fig:semiemp_tau_decisions}
  \includegraphics[width=0.8\linewidth]{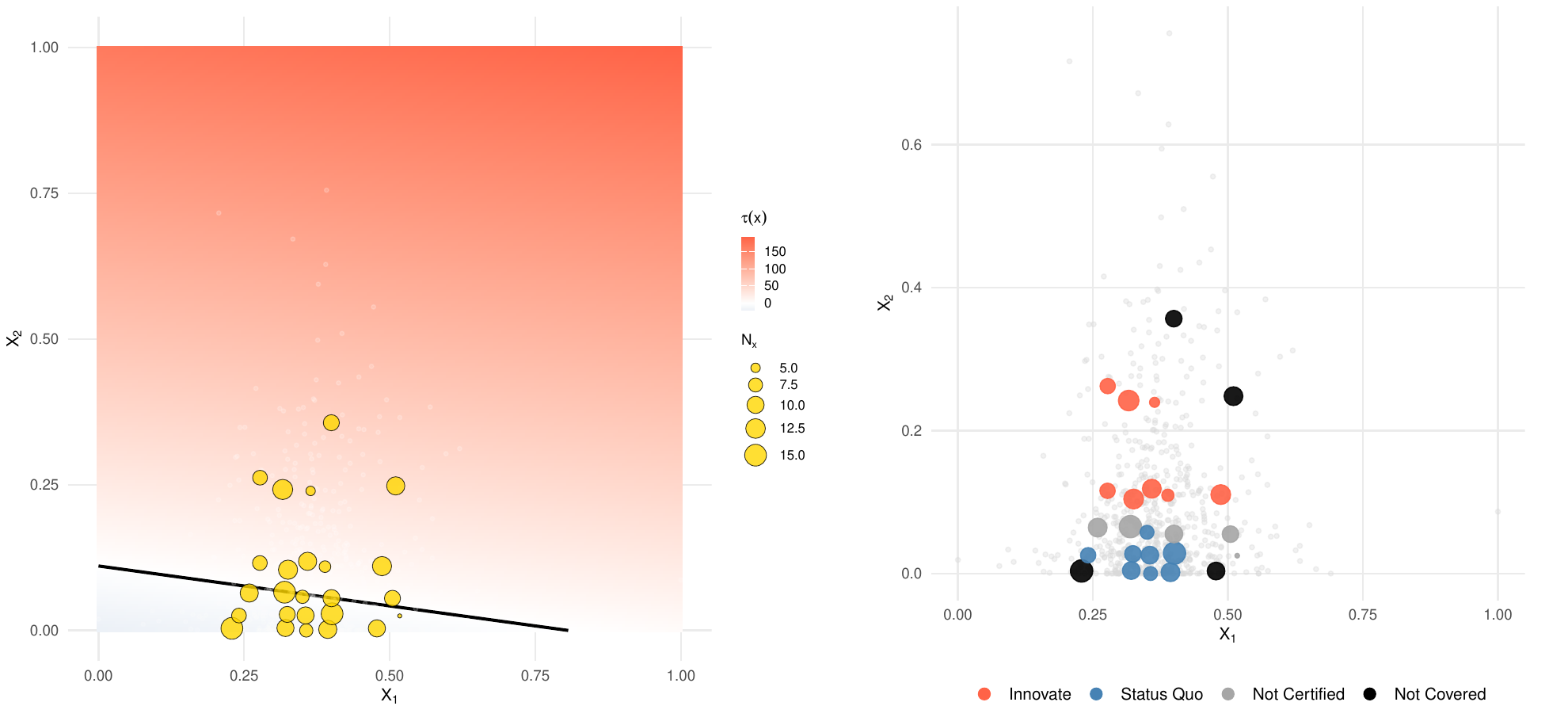}
\smallskip
\begin{minipage}{0.96\linewidth}
  \footnotesize\textit{\textbf{Notes:}} The left panel plots the true synthetic treatment-effect surface $\tau(x)$ over the empirical covariate support. The solid black line is the zero-effect frontier $\tau(x)=0$. Faint white points mark treated donor GPs, while gold circles mark target-cell centroids; circle size is proportional to the number of target GPs in the cell, $N_x$. The right panel plots the asymptotic Delaunay decision map for the same cells under the pooled-donor design. Red cells are certified for innovation, blue cells are certified for the status quo, gray cells are covered by the donor triangulations but not certified, and black cells are not covered. By construction, higher values of $\tau(x)$ correspond to larger gains from lower leakage.
\end{minipage}
\end{figure}

Figure \ref{fig:semiemp_certification} plots the 25 target points sorted by true $\tau(x)$ (in black). 
The DME estimate is plotted in green for correct decisions and red for wrong decisions.
The dark band denotes the asymptotic certification radius (see Theorem \ref{thm:asymp_dec} for a definition) and the lighter band denotes the finite sample radius (see Theorem \ref{thm:finite_dec} for a definition).

$21$ of $25$ target cells are covered by the donor triangulations, $16$ are asymptotically certified (see Theorem \ref{thm:asymp_dec} for a definition), and $14$ remain certified after adding the finite-sample stochastic terms (see Theorem \ref{thm:finite_dec} for a definition). 
Moreover, all cells that are actually certified are assigned the oracle decision, while the few mistakes occur only among uncertified cells. 

\begin{figure}[h!]
  \centering
  \caption{Decisions and Certification Sets}
  \label{fig:semiemp_certification}
  \includegraphics[width=0.8\linewidth]{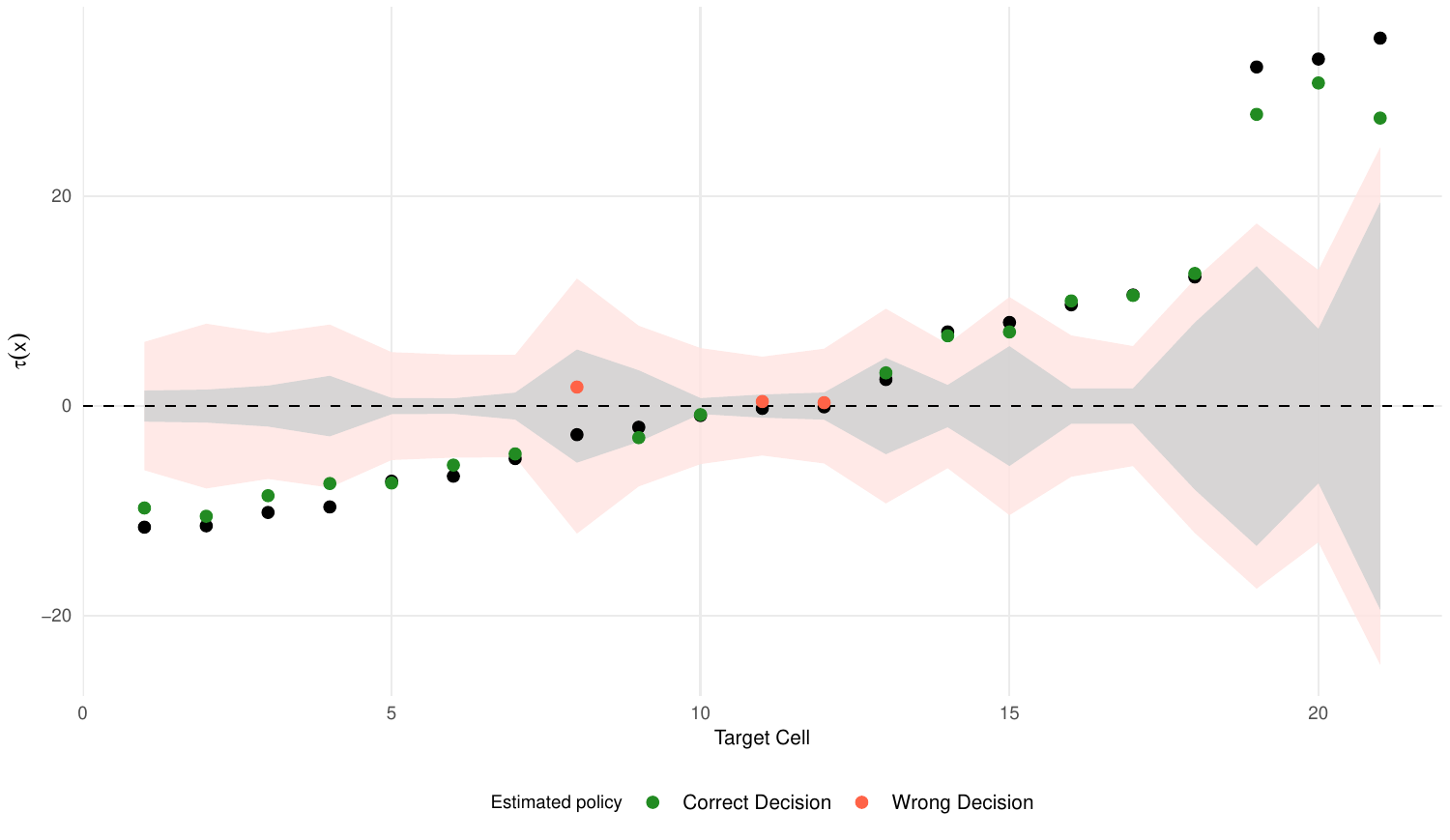}
\smallskip
\begin{minipage}{0.96\linewidth}
  \footnotesize\textit{\textbf{Notes:}} Covered target cells are ordered on the horizontal axis by their true synthetic effect $\tau(x)$. Black points plot the true cell effect, while colored points plot the estimated effect $\hat\tau(x)$. Green points indicate that the induced decision agrees with the oracle policy and red points indicate a mistake. The gray ribbon is the asymptotic certification band $\pm r_\infty(x,m_{\mathrm{del}})$ and the pink ribbon is the finite-sample band obtained by adding the stochastic terms from Theorem \ref{thm:finite_dec}. Cells with $|\tau(x)|$ outside the relevant band are certified. The figure shows that finite-sample certification is more conservative, especially in cells near the zero-effect frontier and in cells with weaker geometric coverage.
\end{minipage}
\end{figure}

\subsection{Optimal Collection Plans}

In this section I illustrate the geometric collection plan in the context of \citet{muralidharan_building_2016} and evaluate its performance in terms of worst-case compensation cost against a random collection plan.
The geometric plan targets the blocks with the largest minimum-enclosing-ball radius across simplices and places a new donor unit at the center of the worst simplex's minimum enclosing ball. 
By contrast, a random plan places new donor units at randomly selected locations (within the convex hull) in randomly selected subsamples.
The empirical exercise considers a target loss equal to half the current one and aims to define a collection plan that achieves that target. 

\begin{figure}[h!]
  \centering
  \caption{Blocks' Refinement - Geometric Plan vs Random Plan}
  \label{fig:semiemp_collection_paths}
  \includegraphics[width=0.96\linewidth]{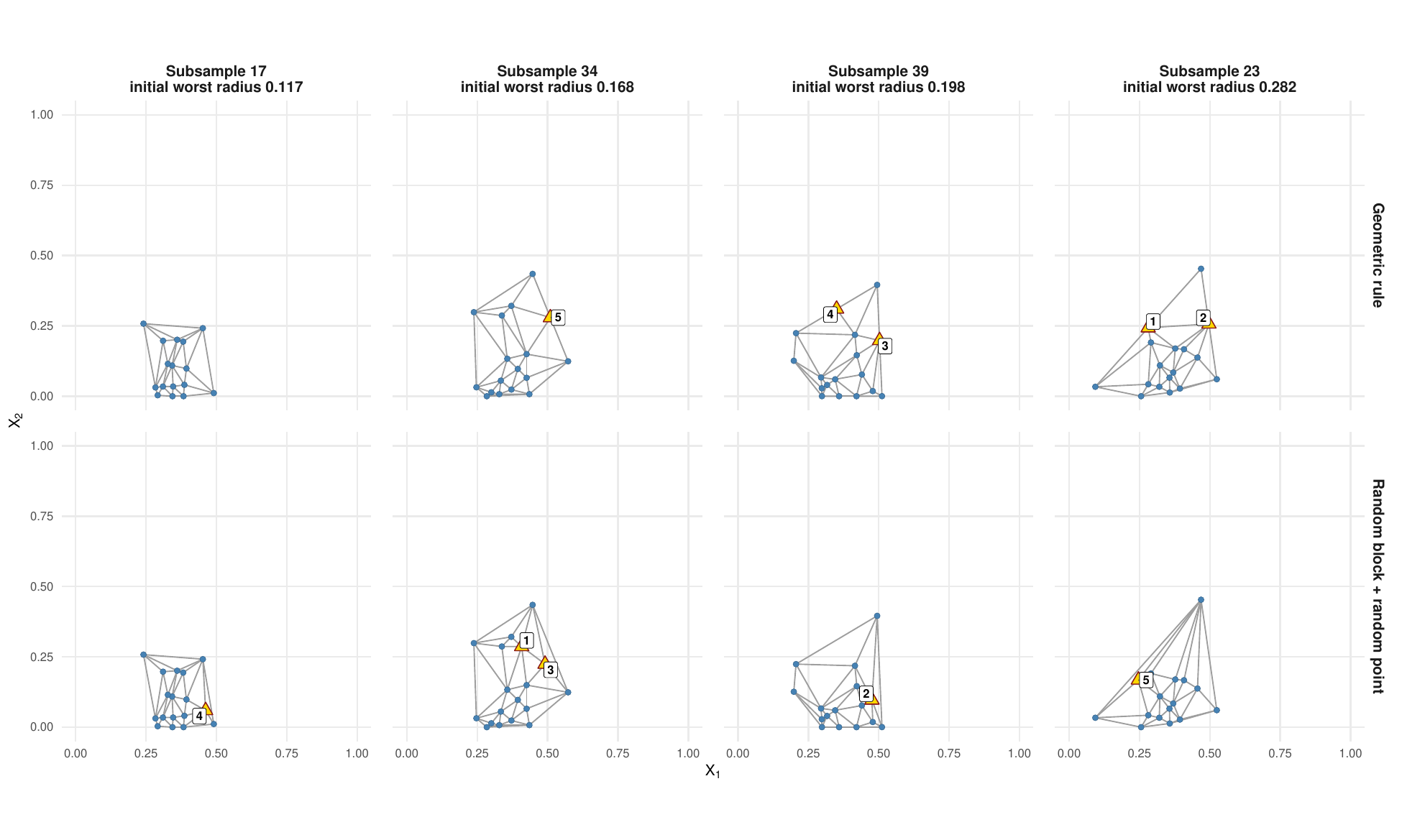}
  \smallskip
  \begin{minipage}{0.92\linewidth}
    \footnotesize\textit{\textbf{Notes:}} Each column fixes one of the four showcased donor blocks from Figure \ref{fig:semiemp_donor_triangulations}. In both rows, the planner is restricted to collect new donor points only within these same four blocks. Blue points are the original treated donor GPs and gray segments are Delaunay edges after five sequential additions under the displayed rule. Gold triangles mark newly collected donor points, labeled by their global insertion order. In the top row, the geometric plan selects, at each step, the showcased block with the largest current worst-simplex radius and inserts the center of that simplex. In the bottom row, the random benchmark draws one showcased block uniformly at random and then draws a point uniformly inside that block's current worst simplex. In every panel, both covariates are rescaled to $[0,1]$.
  \end{minipage}
\end{figure}

In Figure \ref{fig:semiemp_collection_paths}, I illustrate the difference between the geometric plan (see Definition \ref{def:worst_simplex_center}) and a random plan. 
The figure restricts attention to four random blocks and five donor additions, purely for illustrative purposes. 
The first row shows the Geometric Plan, and the second row shows the random plan.
For each block (each column), we can see the Delaunay triangulation in the covariate space and the five points added by the plan, sorted by their order of addition.
The geometric plan adds new donor points in blocks $23$ and $39$, where the worst triangles are visually the coarsest across blocks, and turns a few large simplices into smaller and more regular ones. 
The random plan behaves differently: it spreads the same number of additions across blocks and places them at generic interior locations, so the largest simplices often remain essentially unchanged after five additions.

\begin{figure}
  \centering
  \caption{Worst Case Loss - Geometric Plan vs Random Plan}
  \label{fig:semiemp_budget_paths}
  \includegraphics[width=0.88\linewidth]{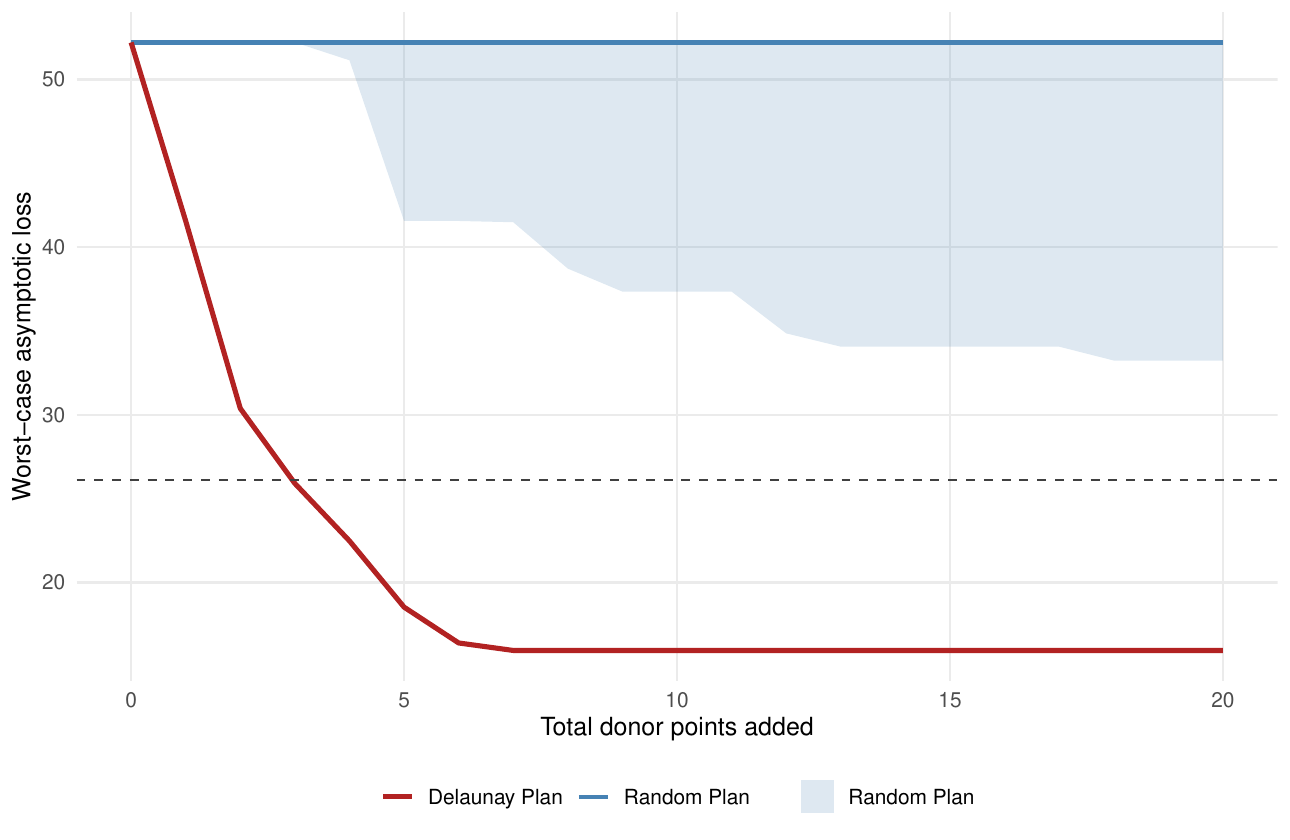}
  \smallskip
  \begin{minipage}{0.92\linewidth}
    \footnotesize\textit{\textbf{Notes:}} The figure reports the stopping-budget experiment based on the full pooled-donor design of the semi-synthetic application. Step $0$ is the baseline design, and the horizontal axis counts the total number of donor points added thereafter. The vertical axis reports the worst-case asymptotic compensation budget across covered target GPs, using the known curvature bound $c$ and the geometric approximation term from Corollary \ref{cor:aff_asympt}. The red line is the deterministic geometric plan path, as defined in Definition \ref{def:worst_simplex_center}. The blue line is the median path across seeded random-block-plus-random-point refinements, and the shaded band is the 10th--90th percentile range across those random replications. The dashed horizontal line marks the target budget used in the experiment.
  \end{minipage}
\end{figure}

Figure \ref{fig:semiemp_budget_paths} shows that this geometric difference maps directly into the policymaker's objective. 
Starting from a baseline worst-case asymptotic budget of about $52.2$, the geometric path lowers the bound to $41.6$ after one addition, to $30.4$ after two, and to $25.9$ after three, thereby crossing the target budget line with only three extra donor points. 
The decline continues up to step $8$, where the path stabilizes around $15.9$. 
One interesting finding is the plateau after step $8$. 
That happens because, after the large triangles that drive most of the worst-case loss are regularized by the plan, worst-case triangles across blocks look more and more similar.
As a result, most additions improve the supremum only marginally.
The random benchmark, instead, delivers little systematic progress. 
Its median path remains at the baseline level over the full $20$-step horizon, and even its lower decile stays above the target budget. 
This happens because it is unlikely that, by placing a random point within and across blocks, we can catch the original worst triangle. 

This result motivates the theory developed in this paper and the interest in geometry. 
Once we can control worst-case loss geometrically, we also know how to design cost-efficient collection plans that achieve sizeable gains against the random collection benchmark.

\section{Conclusions}\label{sec:conclusions}
This paper studied how a policymaker can learn policy decisions from an innovated donor population and apply them on a distinct target population. 
I introduced certification as a criterion linking uniform control of mistake probabilities, away from the decision frontier, to worst-case loss. 
I then showed that, for positive affine-exact matching estimators, the finite-sample certification problem decomposes into a geometric approximation term and stochastic terms, while in large samples the stochastic components vanish and the problem becomes purely geometric.

I next connected this asymptotic geometric problem to Delaunay triangulations and proved that Delaunay Matching solves the relevant pointwise approximation problem optimally within the class considered. 
I then used that result to motivate a geometric donor-data collection rule aimed at reducing worst-case compensation below a target level. 

Finally, in the semi-synthetic application based on the Smartcards experiment \citep{muralidharan_building_2016}, I illustrated that this geometric perspective was informative both for targeting decisions and for designing collection plans, and showed that geometric collections deliver substantial gains relative to random collection.

\bibliography{bib}

\appendix

\section{Formal Proofs}\label{app:formal_proofs}
\begin{proof}[Lemma \ref{lem:compensation}]\label{proof:lem:compensation}
Fix $Q\in\mathcal P$ and $f_0,f_1\in\mathcal F_c$, and condition throughout on the realized donor covariates $\mathbf X_N$. Let $\mathcal C:=\mathcal C_\alpha(m)$. If
\begin{equation}
\mathbb P_{Q^J\times P^N}(X_j\in \mathcal C\mid\mathbf X_N)=0,
\end{equation}
then
\begin{equation}
\mathbb E_{Q^J\times P^N}[\ell(X_j)\mid\mathbf X_N]
=
\mathbb E_{Q^J\times P^N}[\ell(X_j)\mid X_j\notin \mathcal C,\mathbf X_N]
\le
\gamma_\alpha(m),
\end{equation}
because on the event $\{X_j\notin \mathcal C\}$ one has $|\tau(X_j)|\le \gamma_\alpha(m)$.

If instead
\begin{equation}
\mathbb P_{Q^J\times P^N}(X_j\in \mathcal C\mid\mathbf X_N)=1,
\end{equation}
then, by \eqref{eq:certification},
\begin{align}
\mathbb E_{Q^J\times P^N}[\ell(X_j)\mid\mathbf X_N]
&\le
\bar\tau\,
\mathbb P_{Q^J\times P^N}
\bigl(d^*(X_j)\neq\hat d_m(X_j)\mid X_j\in\mathcal C,\mathbf X_N\bigr)
\\
&\le \alpha\bar\tau.
\end{align}

It remains to consider the case in which
\begin{equation}
0<\mathbb P_{Q^J\times P^N}(X_j\in \mathcal C\mid\mathbf X_N)<1.
\end{equation}
By the conditional law of total expectation,
\begin{align}
\mathbb E_{Q^J\times P^N}[\ell(X_j)\mid\mathbf X_N]
& =
\mathbb P_{Q^J\times P^N}(X_j\in \mathcal C\mid\mathbf X_N)\,
\mathbb E_{Q^J\times P^N}[\ell(X_j)\mid X_j\in \mathcal C,\mathbf X_N]
\\
& \qquad+
\mathbb P_{Q^J\times P^N}(X_j\notin \mathcal C\mid\mathbf X_N)\,
\mathbb E_{Q^J\times P^N}[\ell(X_j)\mid X_j\notin \mathcal C,\mathbf X_N].
\end{align}
On the event $\{X_j\in \mathcal C\}$,
\begin{equation}
\ell(X_j)
=
|\tau(X_j)|\,\mathbf 1\{d^*(X_j)\neq \hat d_m(X_j)\}
\le
\bar\tau\,\mathbf 1\{d^*(X_j)\neq \hat d_m(X_j)\},
\end{equation}
so
\begin{align}
\mathbb E_{Q^J\times P^N}[\ell(X_j)\mid X_j\in \mathcal C,\mathbf X_N]
& \le
\bar\tau\,
\mathbb P_{Q^J\times P^N}\bigl(d^*(X_j)\neq \hat d_m(X_j)\mid X_j\in \mathcal C,\mathbf X_N\bigr)
\\
& \le
\alpha\bar\tau,
\end{align}
where the last inequality follows from \eqref{eq:certification}. On the event $\{X_j\notin \mathcal C\}$, one has $|\tau(X_j)|\le \gamma_\alpha(m)$, hence
\begin{equation}
\mathbb E_{Q^J\times P^N}[\ell(X_j)\mid X_j\notin \mathcal C,\mathbf X_N]
\le
\gamma_\alpha(m).
\end{equation}
Therefore,
\begin{align}
\mathbb E_{Q^J\times P^N}[\ell(X_j)\mid\mathbf X_N]
& \le
\mathbb P_{Q^J\times P^N}(X_j\in \mathcal C\mid\mathbf X_N)\,\alpha\bar\tau
+
\mathbb P_{Q^J\times P^N}(X_j\notin \mathcal C\mid\mathbf X_N)\,\gamma_\alpha(m)
\\
& \le
\max\{\gamma_\alpha(m),\alpha\bar\tau\}.
\end{align}
The same upper bound holds in all three cases. Taking the supremum over $Q\in\mathcal P$ and $f_0,f_1\in\mathcal F_c$ yields the claim almost surely.
\end{proof}
\clearpage

\begin{proof}[Lemma \ref{lem:suff_cert}] \label{proof:lem:suff_cert}
Fix $Q\in\mathcal P$ and $f_0,f_1\in\mathcal F_c$, and condition throughout on the realized donor covariates $\mathbf X_N$. Let $\mathcal C:=\mathcal C_\alpha(m)$. If
\begin{equation}
\mathbb P_{Q^J\times P^N}(X_j\in \mathcal C\mid\mathbf X_N)=0,
\end{equation}
then the claim is immediate. Assume therefore that
\begin{equation}
\mathbb P_{Q^J\times P^N}(X_j\in \mathcal C\mid\mathbf X_N)>0.
\end{equation}
Recall that
\begin{equation}
d^*(X_j)=\mathbf 1\{\tau(X_j)\ge 0\},
\qquad
\hat{d}_m(X_j)=\mathbf 1\{\hat\tau_m(X_j)\ge 0\}.
\end{equation}
If $d^*(X_j)\neq \hat d_m(X_j)$, then $\hat\tau_m(X_j)$ and $\tau(X_j)$ have opposite signs, which implies
\begin{equation}
|\hat\tau_m(X_j)-\tau(X_j)|\ge |\tau(X_j)|.
\end{equation}
Moreover, on the event $\{X_j\in \mathcal C\}$, by definition of $\mathcal C$,
\begin{equation}
|\tau(X_j)|>\gamma_\alpha(m).
\end{equation}
Therefore,
\begin{equation}
\{d^*(X_j)\neq \hat d_m(X_j)\}\cap \{X_j\in \mathcal C\}
\subseteq
\{|\hat\tau_m(X_j)-\tau(X_j)|\ge \gamma_\alpha(m)\}\cap \{X_j\in \mathcal C\}.
\end{equation}
Hence,
\begin{equation}
\mathbb P_{Q^J\times P^N}\bigl(d^*(X_j)\neq \hat d_m(X_j)\mid X_j\in \mathcal C,\mathbf X_N\bigr)
\le
\mathbb P_{Q^J\times P^N}\bigl(|\hat\tau_m(X_j)-\tau(X_j)|\ge \gamma_\alpha(m)\mid X_j\in \mathcal C,\mathbf X_N\bigr).
\end{equation}
Taking the supremum over $Q\in\mathcal P$ and $f_0,f_1\in\mathcal F_c$, and using \eqref{eq:suff_cert_error_bound}, gives
\begin{equation}
\sup_{Q \in \mathcal{P}}\sup_{f_0,f_1 \in \mathcal{F}_c}
\mathbb{P}_{Q^J \times P^N}\bigl(d^*(X_j)\neq \hat d_m(X_j)\mid X_j\in \mathcal C_\alpha(m),\mathbf X_N\bigr)\leq \alpha.
\end{equation}
This proves the claim almost surely.
\end{proof}

\begin{proof}[Theorem \ref{thm:finite_dec}]\label{proof:thm:finite_dec}
Fix $Q\in\mathcal P$ and $f_0,f_1\in\mathcal F_c$, and condition throughout on the donor covariates
\begin{equation}
\mathbf X_N^n:=\{X_{i,k}\}_{i\le n,\ k\le K_N}.
\end{equation}
Let $\mathcal C:=\mathcal C_\alpha(m_w)$. If $\mathbb P_{Q^J\times P^N}(X_j\in \mathcal C\mid \mathbf X_N^n)=0$, the claim is immediate. Otherwise, fix $x\in \mathcal C$ such that $Q_X(x)>0$. On the event $\{X_j=x\}$ one has $N_x\ge 1$.
Note that
\begin{equation}
\hat\tau_w(x)
=
\frac{1}{K_N}\sum_{k=1}^{K_N}\sum_{i=1}^n \hat w_{i,k}(x)Y_{i,k}(1)
-
\frac{1}{N_x}\sum_{j:X_j=x}Y_j(0).
\end{equation}
Using the outcome equations
\begin{equation}
Y_{i,k}(1)=f_1(X_{i,k})+U_{i,k}(1),
\qquad
Y_j(0)=f_0(x)+U_j(0)\quad\text{for }j\text{ such that }X_j=x,
\end{equation}
we obtain
\begin{align}
\hat\tau_w(x)
&=
\frac{1}{K_N}\sum_{k=1}^{K_N}\sum_{i=1}^n \hat w_{i,k}(x)f_1(X_{i,k})
+
\frac{1}{K_N}\sum_{k=1}^{K_N}\sum_{i=1}^n \hat w_{i,k}(x)U_{i,k}(1)
-
f_0(x)
-
\frac{1}{N_x}\sum_{j:X_j=x}U_j(0),
\end{align}
because $\sum_{i=1}^n \hat w_{i,k}(x)=1$ for each $k$. Subtracting
\begin{equation}
\tau(x)=f_1(x)-f_0(x)
\end{equation}
from both sides yields
\begin{align}
\hat\tau_w(x)-\tau(x)
&=
\frac{1}{K_N}\sum_{k=1}^{K_N}\sum_{i=1}^n \hat w_{i,k}(x)\bigl(f_1(X_{i,k})-f_1(x)\bigr)
+
\frac{1}{K_N}\sum_{k=1}^{K_N}\sum_{i=1}^n \hat w_{i,k}(x)U_{i,k}(1)
-
\frac{1}{N_x}\sum_{j:X_j=x}U_j(0).
\end{align}
Therefore, by the triangle inequality,
\begin{align}\label{eq:triangle_main}
|\hat\tau_w(x)-\tau(x)|
&\le
\left|
\frac{1}{K_N}\sum_{k=1}^{K_N}\sum_{i=1}^n \hat w_{i,k}(x)\bigl(f_1(X_{i,k})-f_1(x)\bigr)
\right|
+
\left|
\frac{1}{K_N}\sum_{k=1}^{K_N}\sum_{i=1}^n \hat w_{i,k}(x)U_{i,k}(1)
\right|
+
\left|
\frac{1}{N_x}\sum_{j:X_j=x}U_j(0)
\right|.
\end{align}

\medskip

\noindent\textbf{Step 2: Bound the deterministic approximation term.}

Fix $k\in\{1,\dots,K_N\}$ and define
\begin{equation}
v_{i,k}:=X_{i,k}-x.
\end{equation}
For each $i$, let
\begin{equation}
g_{i,k}(t):=f_1(x+t\,v_{i,k}),\qquad t\in[0,1].
\end{equation}
Then
\begin{equation}
g_{i,k}''(t)=v_{i,k}^\top \nabla^2 f_1(x+t\,v_{i,k})\,v_{i,k}.
\end{equation}
Since $f_1\in\mathcal F_c$,
\begin{equation}\label{eq:second_derivative_bound}
|g_{i,k}''(t)|
\le c\,\|v_{i,k}\|^2.
\end{equation}
Taylor's theorem with integral remainder gives
\begin{equation}
f_1(X_{i,k})-f_1(x)-\nabla f_1(x)^\top (X_{i,k}-x)
=
\int_0^1 (1-t)\,g_{i,k}''(t)\,dt,
\end{equation}
and therefore
\begin{equation}\label{eq:pointwise_taylor_bound}
\bigl|f_1(X_{i,k})-f_1(x)-\nabla f_1(x)^\top (X_{i,k}-x)\bigr|
\le
\frac{c}{2}\|X_{i,k}-x\|^2.
\end{equation}

Now multiply \eqref{eq:pointwise_taylor_bound} by $\hat w_{i,k}(x)\ge 0$ and sum over $i$:
\begin{align}
&\left|
\sum_{i=1}^n \hat w_{i,k}(x)\bigl(f_1(X_{i,k})-f_1(x)\bigr)
-
\nabla f_1(x)^\top \sum_{i=1}^n \hat w_{i,k}(x)(X_{i,k}-x)
\right|
\\
&\qquad\le
\frac{c}{2}\sum_{i=1}^n \hat w_{i,k}(x)\|X_{i,k}-x\|^2.
\end{align}
By affine exactness,
\begin{equation}
\sum_{i=1}^n \hat w_{i,k}(x)(X_{i,k}-x)=0.
\end{equation}
Hence
\begin{equation}\label{eq:bias_bound_per_k}
\left|
\sum_{i=1}^n \hat w_{i,k}(x)\bigl(f_1(X_{i,k})-f_1(x)\bigr)
\right|
\le
\frac{c}{2}\sum_{i=1}^n \hat w_{i,k}(x)\|X_{i,k}-x\|^2.
\end{equation}
Averaging over $k$ gives
\begin{align}\label{eq:avg_bias_bound}
\left|
\frac{1}{K_N}\sum_{k=1}^{K_N}\sum_{i=1}^n \hat w_{i,k}(x)\bigl(f_1(X_{i,k})-f_1(x)\bigr)
\right|
&\le
\frac{c}{2K_N}\sum_{k=1}^{K_N}\sum_{i=1}^n \hat w_{i,k}(x)\|X_{i,k}-x\|^2.
\end{align}

\medskip

\noindent\textbf{Step 3: Concentration for the treated noise term.}

Define
\begin{equation}
Z_1(x):=
\frac{1}{K_N}\sum_{k=1}^{K_N}\sum_{i=1}^n \hat w_{i,k}(x)U_{i,k}(1).
\end{equation}
Conditional on the covariates and weights, the summands are independent, centered, and satisfy
\begin{equation}
\left|
\frac{1}{K_N}\hat w_{i,k}(x)U_{i,k}(1)
\right|
\le
\frac{1}{K_N}\hat w_{i,k}(x)\bar U_1
\qquad\text{a.s.}
\end{equation}
Hence, by Hoeffding's inequality,
\begin{equation}
\mathbb P_{Q^J\times P^N}\bigl(|Z_1(x)|\ge t \mid X_j=x,\mathbf X_N^n\bigr)
\le
2\exp\!\left(
-\frac{K_N^2 t^2}{2\bar U_1^2\sum_{k=1}^{K_N}\sum_{i=1}^n \hat w_{i,k}(x)^2}
\right).
\end{equation}
Set
\begin{equation}
t_1(x):=
\frac{\bar U_1}{K_N}
\sqrt{2\log(4/\alpha)\sum_{k=1}^{K_N}\sum_{i=1}^n \hat w_{i,k}(x)^2}.
\end{equation}
Then
\begin{equation}\label{eq:treated_prob}
\mathbb P_{Q^J\times P^N}\bigl(|Z_1(x)|\le t_1(x)\mid X_j=x,\mathbf X_N^n\bigr)\ge 1-\frac{\alpha}{2}.
\end{equation}

\medskip

\noindent\textbf{Step 4: Concentration for the control noise term.}

Define
\begin{equation}
Z_0(x):=
\frac{1}{N_x}\sum_{j:X_j=x}U_j(0).
\end{equation}
Conditional on the target covariates and on $\mathbf X_N^n$, the summands are independent, centered, and satisfy
\begin{equation}
\left|\frac{1}{N_x}U_j(0)\right|
\le
\frac{\bar U_0}{N_x}
\qquad\text{a.s.}
\end{equation}
Hence
\begin{equation}
\mathbb P_{Q^J\times P^N}\bigl(|Z_0(x)|\ge t\mid \{X_\ell\}_{\ell=1}^J,\mathbf X_N^n\bigr)
\le
2\exp\!\left(
-\frac{N_x t^2}{2\bar U_0^2}
\right).
\end{equation}
Set
\begin{equation}
t_0:=
{\bar U_0}\sqrt{\frac{2\log(4/\alpha)}{N_x}}.
\end{equation}
On the event $\{X_j=x\}$,
\begin{equation}
2\exp\!\left(-\frac{N_x t_0^2}{2\bar U_0^2}\right)
=
2(\alpha/4)^{N_x}
\le
\frac{\alpha}{2},
\end{equation}
because $N_x\ge 1$. Therefore
\begin{equation}\label{eq:control_prob}
\mathbb P_{Q^J\times P^N}\bigl(|Z_0(x)|\le t_0\mid X_j=x,\mathbf X_N^n\bigr)\ge 1-\frac{\alpha}{2}.
\end{equation}

\medskip

\noindent\textbf{Step 5: Combine deterministic and stochastic bounds.}

From \eqref{eq:triangle_main} and \eqref{eq:avg_bias_bound},
\begin{equation}
|\hat\tau_w(x)-\tau(x)|
\le
\frac{c}{2K_N}\sum_{k=1}^{K_N}\sum_{i=1}^n \hat w_{i,k}(x)\|X_{i,k}-x\|^2
+
|Z_1(x)|
+
|Z_0(x)|.
\end{equation}
By \eqref{eq:treated_prob}, \eqref{eq:control_prob}, and the union bound,
\begin{equation}
\mathbb P_{Q^J\times P^N}\bigl(|Z_1(x)|\le t_1(x)\ \& \ |Z_0(x)|\le t_0\mid X_j=x,\mathbf X_N^n\bigr)\ge 1-\alpha.
\end{equation}
Therefore, with probability at least $1-\alpha$,
\begin{equation}
|\hat\tau_w(x)-\tau(x)|
\le
\frac{c}{2K_N}\sum_{k=1}^{K_N}\sum_{i=1}^n \hat w_{i,k}(x)\|X_{i,k}-x\|^2
+t_1(x)+t_0,
\end{equation}
which is exactly \eqref{eq:main_radius_bound}.

\medskip

\noindent\textbf{Step 6: Convert the radius bound into a decision guarantee.}

Recall that
\begin{equation}
d^*(x)=\mathbf 1\{\tau(x)\ge 0\},
\qquad
\hat{d}_m(x)=\mathbf 1\{\hat\tau_w(x)\ge 0\}.
\end{equation}
If
\begin{equation}
|\hat\tau_w(x)-\tau(x)|< |\tau(x)|,
\end{equation}
then $\hat\tau_w(x)$ and $\tau(x)$ have the same sign, or one may be zero, so $\hat{d}_m(x)=d^*(x)$. Thus
\begin{equation}
\hat{d}_m(x)\neq d^*(x)
\quad\Longrightarrow\quad
|\hat\tau_w(x)-\tau(x)|\geq|\tau(x)|.
\end{equation}
Hence, if $|\tau(x)|>r_\alpha(x,m_w)$, then
\begin{equation}
\mathbb P_{Q^J\times P^N}\bigl(\hat{d}_m(X_j)\neq d^*(X_j)\mid X_j=x,\mathbf X_N^n\bigr)\le \alpha.
\end{equation}
Since $x\in \mathcal C$,
\begin{equation}
|\tau(x)|
>
\sup_{x\in \mathcal X}r_\alpha(x,m_w)
\ge
r_\alpha(x,m_w),
\end{equation}
so the previous display applies. Since $x\in \mathcal C$ with $Q_X(x)>0$ was arbitrary, by iterated expectations,
\begin{align}
\mathbb P_{Q^J\times P^N}\bigl(\hat d_m(X_j)\neq d^*(X_j)\mid X_j\in \mathcal C,\mathbf X_N^n\bigr)
&=
\mathbb E_{Q^J\times P^N}\!\left[
\mathbb P_{Q^J\times P^N}\bigl(\hat d_m(X_j)\neq d^*(X_j)\mid X_j,\mathbf X_N^n\bigr)
\,\middle|\, X_j\in \mathcal C,\mathbf X_N^n
\right]
\\
&\le \alpha.
\end{align}
Taking the supremum over $Q\in\mathcal P$ and $f_0,f_1\in\mathcal F_c$ yields the claim.
This completes the proof.
\end{proof}
\clearpage

\begin{proof}[Theorem \ref{thm:asymp_dec}]\label{proof:thm:asymp_dec}
Fix $m_w\in\mathcal M_w$ and a feasible realization of the donor covariates. By Definition \ref{def:matching_positive}, for every block $k$ and every feasible target point $x$,
\begin{equation}
\sum_{i=1}^n \hat w_{i,k}(x)^2
\le
\left(\sum_{i=1}^n \hat w_{i,k}(x)\right)^2
=1,
\end{equation}
where the inequality follows from the nonnegativity of the weights. Summing across blocks gives
\begin{equation}
\sum_{k=1}^{K_N}\sum_{i=1}^n \hat w_{i,k}(x)^2
\le K_N.
\end{equation}
Consequently, the donor stochastic term in \eqref{eq:main_radius_bound} satisfies
\begin{align}
0
&\le
\frac{\bar U_1}{K_N}
\sqrt{2\log(4/\alpha_N)
\sum_{k=1}^{K_N}\sum_{i=1}^n \hat w_{i,k}(x)^2}
\\
&\le
\bar U_1\sqrt{\frac{2\log(4/\alpha_N)}{K_N}}
\longrightarrow 0.
\end{align}
The target stochastic term satisfies
\begin{equation}
\bar U_0\sqrt{\frac{2\log(4/\alpha_N)}{N_x}}
\longrightarrow 0.
\end{equation}
Both convergences follow from the rate conditions in the statement of the theorem. Substituting them into \eqref{eq:main_radius_bound} yields, for every feasible $x$,
\begin{equation}
r_{\alpha_N}(x,m_w)
=
\frac{c}{2K_N}
\sum_{k=1}^{K_N}\sum_{i=1}^n
\hat w_{i,k}(x)\|X_{i,k}-x\|^2
+o(1).
\end{equation}

It remains to establish the certification guarantee. For every $N$, Theorem \ref{thm:finite_dec}, applied with $\alpha=\alpha_N$, implies
\begin{equation}
\sup_{Q\in\mathcal P}
\sup_{f_0,f_1\in\mathcal F_c}
\mathbb P_{Q^J\times P^N}
\left(
\hat d_m(X_j)\neq d^*(X_j)
\,\middle|\,
X_j\in\mathcal C_{\alpha_N}(m_w),
\mathbf X_N^n
\right)
\le \alpha_N
\end{equation}
almost surely. Place the donor samples on the product probability space supporting an infinite i.i.d. donor sequence. Since the preceding inequality holds on an event of probability one for every $N$, and the sequence of sample sizes is countable, the intersection of these events also has probability one. Hence the inequality holds simultaneously along the sequence. Because $\alpha_N\downarrow0$, the squeeze theorem gives
\begin{equation}
\sup_{Q\in\mathcal P}
\sup_{f_0,f_1\in\mathcal F_c}
\mathbb P_{Q^J\times P^N}
\left(
\hat d_m(X_j)\neq d^*(X_j)
\,\middle|\,
X_j\in\mathcal C_{\alpha_N}(m_w),
\mathbf X_N^n
\right)
\longrightarrow0
\end{equation}
almost surely, as claimed.
\end{proof}

\begin{proof}[Remark \ref{rem:fixed_groups}]
Fix a measurable map $\pi:\mathcal X\to\mathcal G$, where $\mathcal G$ is a finite collection of target groups. Suppose that the matching rule remains pointwise in $x$, but the policymaker aggregates target units at the group level. For any $g\in\mathcal G$, let
\begin{equation}
N_g:=\sum_{j=1}^J \mathbf 1\{\pi(X_j)=g\},
\end{equation}
and define the grouped estimator
\begin{equation}
\hat\tau_m^\pi(g):=
\frac{1}{N_g}
\sum_{j=1}^J
\hat\tau_w(X_j)\mathbf 1\{\pi(X_j)=g\}.
\end{equation}
The corresponding population target is
\begin{equation}
\tau^\pi(g):=\mathbb E_Q[\tau(X_j)\mid \pi(X_j)=g].
\end{equation}
The grouped estimation error can be written as
\begin{equation}
\hat\tau_m^\pi(g)-\tau^\pi(g)
=
\frac{1}{N_g}\sum_{j:\pi(X_j)=g}\bigl(\hat\tau_w(X_j)-\tau(X_j)\bigr)
+
\frac{1}{N_g}\sum_{j:\pi(X_j)=g}\bigl(\tau(X_j)-\tau^\pi(g)\bigr).
\end{equation}
Under the asymptotic condition $N_g\to\infty$, the second term converges to zero in probability by the law of large numbers, so grouping removes the target-side repetition issue.

Now define the grouped asymptotic certification boundary by
\begin{equation}
r_\infty^\pi(g,m_w):=
\sup_{x\in \pi^{-1}(g)} r_\infty(x,m_w).
\end{equation}
This is the natural worst-case bound within group $g$, because target units in the same group may have different covariate values $x$ and the matching rule is still evaluated pointwise. Since the cells $\{\pi^{-1}(g):g\in\mathcal G\}$ form a partition of $\mathcal X$, it follows immediately that
\begin{equation}
\sup_{g\in\mathcal G} r_\infty^\pi(g,m_w)
=
\sup_{g\in\mathcal G}\sup_{x\in \pi^{-1}(g)} r_\infty(x,m_w)
=
\sup_{x\in\mathcal X} r_\infty(x,m_w).
\end{equation}
Hence fixed ex ante grouping does not alter the worst-case geometric criterion.

Finally, let $x^\star$ be any covariate value attaining, or approximating arbitrarily closely, the supremum of $r_\infty(x,m_w)$, and let $g^\star=\pi(x^\star)$. If the adversary chooses a degenerate target law with $Q_X=\delta_{x^\star}$, then all target mass lies in group $g^\star$, so $N_{g^\star}=J$. Therefore the least-favorable target law remains concentrated on covariate values with the worst geometry, even after introducing fixed target groups.
\end{proof}

\begin{proof}[Theorem \ref{thm:delaunay_budget}]\label{proof:thm:delaunay_budget}
Fix a donor block $S_k^n$ and a point $x\in \mathrm{conv}(\{X_{i,k}\}_{i=1}^n)$. Let
\begin{equation}
\sigma_k^{\mathrm{del}}(x)=\mathrm{conv}(V_{1,k},\dots,V_{d+1,k})
\end{equation}
be a Delaunay simplex containing $x$, and write $\lambda_{1,k}(x),\dots,\lambda_{d+1,k}(x)$ for the barycentric coordinates of $x$ with respect to that simplex.

Because $\sigma_k^{\mathrm{del}}(x)$ is Delaunay, there exists a circumsphere with center $c_k(x)\in\mathbb R^d$ and radius $r_k(x)>0$ such that
\begin{equation}
\|V_{\ell,k}-c_k(x)\|=r_k(x)
\qquad\text{for }\ell=1,\dots,d+1,
\end{equation}
and no donor point lies strictly inside that sphere. Define the affine function
\begin{equation}
\ell_{k,x}(u):=2c_k(x)^\top u+r_k(x)^2-\|c_k(x)\|^2.
\end{equation}
For each vertex $V_{\ell,k}$ of the simplex,
\begin{equation}
\|V_{\ell,k}\|^2-\ell_{k,x}(V_{\ell,k})
=
\|V_{\ell,k}-c_k(x)\|^2-r_k(x)^2
=
0,
\end{equation}
so
\begin{equation}
\ell_{k,x}(V_{\ell,k})=\|V_{\ell,k}\|^2.
\end{equation}
For any donor point $X_{i,k}$,
\begin{equation}
\|X_{i,k}\|^2-\ell_{k,x}(X_{i,k})
=
\|X_{i,k}-c_k(x)\|^2-r_k(x)^2
\ge 0,
\end{equation}
because no donor point lies inside the circumsphere. Hence
\begin{equation}\label{eq:delaunay_minorant}
\ell_{k,x}(X_{i,k})\le \|X_{i,k}\|^2
\qquad\text{for every }i.
\end{equation}

Now evaluate the Delaunay weights. Since
\begin{equation}
x=\sum_{\ell=1}^{d+1}\lambda_{\ell,k}(x)V_{\ell,k}
\qquad\text{and}\qquad
\sum_{\ell=1}^{d+1}\lambda_{\ell,k}(x)=1,
\end{equation}
it follows that
\begin{align}
\sum_{i=1}^n \hat w_{i,k}^{\mathrm{del}}(x)\|X_{i,k}\|^2
&=
\sum_{\ell=1}^{d+1}\lambda_{\ell,k}(x)\|V_{\ell,k}\|^2 \\
&=
\sum_{\ell=1}^{d+1}\lambda_{\ell,k}(x)\ell_{k,x}(V_{\ell,k}) \\
&=
\ell_{k,x}\!\left(\sum_{\ell=1}^{d+1}\lambda_{\ell,k}(x)V_{\ell,k}\right) \\
&=
\ell_{k,x}(x).
\end{align}

Take now any feasible vector $w\in\mathbb R^n$ satisfying
\begin{equation}
w_i\ge 0,
\qquad
\sum_{i=1}^n w_i=1,
\qquad
\sum_{i=1}^n w_iX_{i,k}=x.
\end{equation}
Using \eqref{eq:delaunay_minorant},
\begin{align}
\sum_{i=1}^n w_i\|X_{i,k}\|^2
&\ge
\sum_{i=1}^n w_i\ell_{k,x}(X_{i,k}) \\
&=
\ell_{k,x}\!\left(\sum_{i=1}^n w_iX_{i,k}\right) \\
&=
\ell_{k,x}(x).
\end{align}
Therefore
\begin{equation}\label{eq:pointwise_norm_opt}
\sum_{i=1}^n \hat w_{i,k}^{\mathrm{del}}(x)\|X_{i,k}\|^2
\le
\sum_{i=1}^n w_i\|X_{i,k}\|^2.
\end{equation}

Finally, any feasible vector satisfies
\begin{align}
\sum_{i=1}^n w_i\|X_{i,k}-x\|^2
&=
\sum_{i=1}^n w_i\|X_{i,k}\|^2
-2x^\top\sum_{i=1}^n w_iX_{i,k}
+\|x\|^2\sum_{i=1}^n w_i \\
&=
\sum_{i=1}^n w_i\|X_{i,k}\|^2-\|x\|^2.
\end{align}
The same identity holds for $\hat w_{i,k}^{\mathrm{del}}(x)$. Hence \eqref{eq:pointwise_norm_opt} implies
\begin{equation}
\sum_{i=1}^n \hat w_{i,k}^{\mathrm{del}}(x)\|X_{i,k}-x\|^2
\le
\sum_{i=1}^n w_i\|X_{i,k}-x\|^2
\end{equation}
for every feasible $w$, which proves the claim.
\end{proof}

\begin{proof}[Corollary \ref{cor:delaunay_affordability}]\label{proof:cor:delaunay_affordability}
Fix a feasible realization of the donor covariates, $m_w\in\mathcal M_w$, and $x\in\mathcal X$. For every donor block $k$, the weights induced by $m_w$ satisfy
\begin{equation}
\hat w_{i,k}(x)\ge 0,
\qquad
\sum_{i=1}^n \hat w_{i,k}(x)=1,
\qquad
\sum_{i=1}^n \hat w_{i,k}(x)X_{i,k}=x.
\end{equation}
By Theorem \ref{thm:delaunay_budget}, applied to block $k$, it follows that
\begin{equation}
\sum_{i=1}^n \hat w_{i,k}^{\mathrm{del}}(x)\|X_{i,k}-x\|^2
\le
\sum_{i=1}^n \hat w_{i,k}(x)\|X_{i,k}-x\|^2
\end{equation}
for every $k=1,\dots,K_N$. Summing these inequalities across blocks and multiplying by $c/(2K_N)$ yields
\begin{equation}
\frac{c}{2K_N}
\sum_{k=1}^{K_N}\sum_{i=1}^n
\hat w_{i,k}^{\mathrm{del}}(x)\|X_{i,k}-x\|^2
\le
\frac{c}{2K_N}
\sum_{k=1}^{K_N}\sum_{i=1}^n
\hat w_{i,k}(x)\|X_{i,k}-x\|^2.
\end{equation}
Taking the supremum over $x\in\mathcal X$ gives
\begin{equation}
\sup_{x\in\mathcal X}
\frac{c}{2K_N}
\sum_{k=1}^{K_N}\sum_{i=1}^n
\hat w_{i,k}^{\mathrm{del}}(x)\|X_{i,k}-x\|^2
\le
\sup_{x\in\mathcal X}
\frac{c}{2K_N}
\sum_{k=1}^{K_N}\sum_{i=1}^n
\hat w_{i,k}(x)\|X_{i,k}-x\|^2.
\end{equation}
The final display in the corollary follows from Corollary \ref{cor:aff_asympt}, applied with $m_w=m_{\mathrm{del}}$, and the preceding inequality.
\end{proof}

\end{document}